\documentclass{amsart}
\usepackage{amssymb,amsmath,amsfonts}
\usepackage[foot]{amsaddr}
\usepackage{geometry}
\usepackage{multicol,multirow}
\usepackage{graphicx}
\usepackage{enumerate,enumitem,setspace}
\usepackage{verbatim}
\usepackage{pdfsync}
\usepackage{gastex}
\usepackage[lowtilde]{url}
\usepackage{xcolor}

\newtheorem{theorem}{Theorem}

\newtheorem{lemma}[theorem]{Lemma}
\newtheorem{proposition}[theorem]{Proposition}

\newtheorem{definition}[theorem]{Definition}
\theoremstyle{remark}
\newtheorem{remark}[theorem]{Remark}

\newcommand{\Wbf}{\mathbf{W}^{\ge 6}_{\mathrm{bf}}}

\begin{document}
\title{Complexity of regular bifix-free languages}
\author{Robert Ferens}
\email{robert.ferens@interia.pl}
\author{Marek Szyku{\l}a}
\email{msz@cs.uni.wroc.pl}
\address{Institute of Computer Science,\\
University of Wroc{\l}aw, Wroc{\l}aw, Poland}

\begin{abstract}
We study descriptive complexity properties of the class of regular bifix-free languages, which is the intersection of prefix-free and suffix-free regular languages.
We show that there exist a single ternary universal (stream of) bifix-free languages that meet all the bounds for the state complexity basic operations (Boolean operations, product, star, and reversal).
This is in contrast with suffix-free languages, where it is known that there does not exist such a stream.
Then we present a stream of bifix-free languages that is most complex in terms of all basic operations, syntactic complexity, and the number of atoms and their complexities, which requires a superexponential alphabet.

We also complete the previous results by characterizing state complexity of product, star, and reversal, and establishing tight upper bounds for atom complexities of bifix-free languages. We show that to meet the bound for reversal we require at least 3 letters and to meet the bound for atom complexities $n+1$ letters are sufficient and necessary. For the cases of product, star, and reversal we show that there are no gaps (magic numbers) in the interval of possible state complexities of the languages resulted from an operation; in particular, the state complexity of the product $L_m L_n$ is always $m+n-2$, while of the star is either $n-1$ or $n-2$.

\smallskip
\noindent\textsc{Keywords}: atom complexity, bifix-free, Boolean operations, magic number, most complex, prefix-free, product, quotient complexity, regular language, reversal, state complexity, suffix-free, syntactic complexity, transition semigroup
\end{abstract}

\maketitle
\section{Introduction}

A language is \emph{prefix-free} or \emph{suffix-free} if no word in the language is a proper prefix or suffix, respectively, of another word from the language.
If a language is prefix-free and suffix-free then it is \emph{bifix-free}.
Languages with these properties have been studied extensively.
They form important classes of codes, whose applications can be found in such fields as cryptography, data compression, information transmission, and error correction methods.
In particular, \emph{prefix} and \emph{suffix codes} are prefix-free and suffix-free languages, respectively, while bifix-free languages can serve as both kinds of codes.
For a survey about codes see~\cite{BPR09,JuKo1997Codes}.
Moreover, they are special cases of \emph{convex languages} (see e.g.~\cite{BSX10} for the related algorithmic problems).
Here we are interested how the descriptive complexity properties of prefix-free and suffix-free languages are shared in their common subclass.

There are three natural measures of complexity of a regular language that are related to the Myhill (Myhill-Nerode) congruence on words.
The usual \emph{state complexity} or \emph{quotient complexity} is the number of states in a minimal DFA recognizing the language.
Therefore, state complexity measures how much memory we need to store the language in the form of a DFA, or how much time we need to perform an operation that depends on the size of the DFA.
Therefore, we are interested in finding upper bounds for complexities of the resulting languages obtained as a result of some operation (e.g. union, intersection, product, or reversal).
\emph{Syntactic complexity} measures the number of transformations in the transition semigroup or, equivalently, the number of classes of words that act differently on the states~\cite{BrYe11,Pin97}; this provides a natural bound on the time and space complexity of algorithms working on the transition semigroup (for example, a simple algorithm checking whether a language is \emph{star-free} just enumerates all transformations and verifies whether no one of them contains a non-trivial cycle \cite{McSe71}).
The third measure is called the \emph{complexity of atoms} \cite{BrTa14}, which is the number and state complexities of the languages of words that distinguish exactly the same subset of states (quotients).

Most complex languages and universal witnesses were proposed by Brzozowski in~\cite{Brz13}.
The point here is that, it is more suitable to have a single witness that is most complex in the given subclass of regular languages, instead of having separate witnesses meeting the upper bound for each particular measure and operation.
Besides theoretical aspects, this concept has also a practical motivation:
To test efficiency of various algorithms or systems operating on automata (e.g. computational package GAP~\cite{GAP4}), it is natural to use worst-case examples, that is, languages with maximal complexities.
Therefore, it is preferred to have just one universal most complex example than a set of separate example for every particular case.
Of course, it is also better to use a smallest possible alphabet.

It may be surprising that such a single witness exists for most of the natural subclasses of regular languages: the class of all regular languages~\cite{Brz13}, right-, left-, and two-sided ideals~\cite{BDL15}, and prefix-convex languages \cite{BrSi16PrefixConvex}.
However, there does not exist a single witness for the class of suffix-free languages \cite{BrSz15ComplexityOfSuffixFree}, where two different witnesses must be used.

In this paper we continue the previous studies concerning the class of bifix-free languages \cite{BrJiLiSm14QuotientComplexityOfBifixFactorSubwordFree,SzWi16SyntacticComplexityOfBifixFree}.
In~\cite{BrJiLiSm14QuotientComplexityOfBifixFactorSubwordFree} the tight bound on the state complexity of basic operations on bifix-free languages were established; however, the witnesses were different for particular cases.
The syntactic complexity complexity of bifix-free languages was first studied in~\cite{BLY12}, where a lower bound was established, and then the formula was shown to be an upper bound in~\cite{SzWi16SyntacticComplexityOfBifixFree}.

Our main contributions are as follows:
\begin{enumerate}
\item We show a single ternary witness of bifix-free languages that meets the upper bounds for all basic operations.
This is in contrast with the class of suffix-free languages, where such most complex languages do not exist.
\item We show that there exist most complex languages in terms of state complexity of all basic operations, syntactic complexity, and number of atoms and their complexities. It uses a superexponential alphabet, which cannot be reduced.
\item We prove a tight upper bound on the number of atoms and the quotient complexities of atoms of bifix-free languages.
\item We provide a complete characterization of state complexity for product and star, and show the exact ranges for the possible state complexities for product, star, and reversal of bifix-free languages.
\item We prove that at least a ternary alphabet must be used to meet the bound for reversal, and at an $(n+1)$-ary alphabet must be used to meet the bounds for atom complexities.
\end{enumerate}

\section{Preliminaries}

\subsection{Regular languages and complexities}

Let $\Sigma$ be a non-empty finite alphabet.
In this paper we deal with regular languages $L \subseteq \Sigma^*$.
For a word $w \in L$, the \emph{(left) quotient} of $L$ is the set $\{u \mid wu \in L\}$, which is also denoted by $L.w$.
Left quotients are related to the Myhill-Nerode congruence on words, where two words $u,v \in \Sigma^*$ are equivalent if for every $x \in \Sigma^*$, we have $ux \in L$ if and only if $vx \in L$.
Thus the number of quotients is the number of equivalence classes in this relation.
The number of quotients of $L$ is the \emph{quotient complexity} $\kappa(L)$ of this language~\cite{Brz10a}. A language is regular if it has a finite number of quotients.

Let $L,K \subseteq \Sigma^*$ be regular languages over the same alphabet $\Sigma$.
By \emph{Boolean operations} on these languages we mean \emph{union} $L \cup K$, \emph{intersection} $L \cap K$, \emph{difference} $L \setminus K$, and \emph{symmetric difference} $L \oplus K$.
The reverse language $L^R$ of $L$ is the language $\{a_k\ldots a_1 \mid a_1 \ldots a_k \in L, a_1,\ldots,a_k \in \Sigma\}$.
By the \emph{basic operations} on regular languages we mean the Boolean operations, the product (concatenation), the star, and the reversal operation.
By the \emph{complexity} of an operation we mean the maximum possible quotient complexity of the resulted language, given as a function of the quotient complexities of the operands.

The \emph{syntactic complexity} $\sigma(L)$ of $L$ is the number of equivalence classes of the Myhill equivalence relation on $\Sigma^+$, where two words $u,v \in \Sigma^+$ are equivalent if for any words $x,y \in \Sigma^*$, we have $xuy \in L$ if and only if $xvy \in L$.

The third measure of complexity of a regular language $L$ is the number and quotient complexities of $\emph{atoms}$~\cite{BrTa14}.
Atoms arise from the left congruence of words refined by Myhill equivalence relation: two words $u,v \in \Sigma^*$ are equivalent if for any word $x \in \Sigma^*$, we have $xu \in L$ if and only if $xv \in L$~\cite{Iva16ComplexityOfAtoms}.
Thus $u$ and $v$ are equivalent if they belong exactly to the same left quotients of $L$.
An equivalence class of this relation is an \emph{atom}~\cite{BrTa14} of $L$.
It is known that (see~\cite{BrTa14}) an atom is a non-empty intersection of quotients and their complements, and the quotients of a language are unions of its atoms.
Therefore, we can write $A_S$ for an atom, where $S$ is the set of quotients of $L$; then $A_S$ is the intersection of the quotients of $L$ from $S$ together with the complements of the quotients of $L$ outside $S$.

\subsection{Finite automata and transformations}

A \emph{deterministic finite automaton} (\emph{DFA}) is a tuple $\mathcal{D}=(Q,\Sigma,\delta,q_0,F)$, where $Q$ is a finite non-empty set of \emph{states}, $\Sigma$ is a finite non-empty \emph{alphabet}, $\delta\colon Q\times \Sigma\to Q$ is the \emph{transition function}, $q_0\in Q$ is the \emph{initial} state, and $F\subseteq Q$ is the set of \emph{final} states.
We extend $\delta$ to a function $\delta\colon Q\times \Sigma^*\to Q$ as usual: for $q \in Q$, $w \in \Sigma^*$, and $a \in \Sigma$, we have $\delta(q,\varepsilon) = q$ and $\delta(q,wa) = \delta(\delta(q,w),a)$, where $\varepsilon$ denotes the empty word.

A state $q \in Q$ is \emph{reachable} if there exists a word $w \in \Sigma^*$ such that $\delta(q_0,w) = q$. Two states $p,q \in Q$ are \emph{distinguishable} if there exists a word $w \in \Sigma^*$ such that either $\delta(p,w) \in F$ and $\delta(q,w) \notin F$ or $\delta(p,w) \notin F$ and $\delta(q,w) \in F$.

A DFA is \emph{minimal} if there is no DFA with a smaller number of states that recognizes the same language. It is well known that this is equivalent to that every state is reachable and every pair of distinct states is distinguishable.
Given a regular language $L$, all its minimal DFAs are isomorphic, and their number of states is equal to the number of left quotients $\kappa(L)$ (see e.g.~\cite{Brz10a}).
Hence, the quotient complexity $\kappa(L)$ is also called the \emph{state complexity} of $L$.
If a DFA is minimal then every state $q$ corresponds to a quotient of the language, which is the set words $w$ such that $\delta(q,w) \in F$. We denote this quotient by $K_q$.
We also write $A_S$, where $S$ is a subset of states, for
$$A_S = \bigcap_{q \in S} K_q \cap \bigcap_{q \in \overline{S}} \overline{K_q},$$
which is an atom if $A_S$ is non-empty.

A state $q$ is \emph{empty} if $K_q = \emptyset$.

Throughout the paper, by $\mathcal{D}_n$ we denote a DFA with $n$ states, and without loss of generality we always assume that its set of states $Q = \{0,\ldots,n-1\}$ and that the initial state is $0$.

In any DFA $\mathcal{D}_n$, every letter $a\in \Sigma$ induces a transformation $\delta_a$ on the set $Q$ of $n$ states.
By $\mathcal{T}_n$ we denote the set of all $n^n$ transformations of $Q$; then $\mathcal{T}_n$ is a monoid under composition.
For two transformations $t_1,t_2$ of $Q$, we denote its composition as $t_1 t_2$.
The transformation induced by a word $w \in \Sigma^*$ is denoted by $\delta_w$.
The \emph{image} of $q\in Q$ under a transformation $\delta_w$ is denoted by $q \delta_w$, and the \emph{image} of a subset $S \subseteq Q$ is $S \delta_w = \{q \delta_w \mid q \in S\}$.
The \emph{preimage} of a subset $S \subset Q$ under a transformation $\delta_w^{-1}$ is $S \delta_w^{-1} = \{q \in Q \mid q \delta_w \in S\}$.
Note that if $w = a_1 \ldots a_k$, then $\delta_{a_1 \ldots a_k}^{-1} = \delta_{a_k}^{-1} \ldots \delta_{a_1}^{-1}$.
The \emph{identity transformation} is denoted by $\mathbf{1}$, which is also the same as $\delta_\varepsilon$, and we have $q\mathbf{1} = q$ for all $q \in Q$.

The \emph{transition semigroup} $T(n)$ of $\mathcal{D}_n$ is the semigroup of all transformations generated by the transformations induced by $\Sigma$.
Since the transition semigroup of a minimal DFA of a language $L$ is isomorphic to the syntactic semigroup of $L$~\cite{Pin97}, syntactic complexity $\sigma(L)$ is equal to the cardinality $|T(n)|$ of the transition semigroup $T(n)$.

Since a transformation $t$ of $Q$ can be viewed as a directed graph with regular out-degree equal $1$ and possibly with loops, we transfer well known graph terminology to transformations:
The \emph{in-degree} of a state $q \in Q$ is the cardinality $|\{p \in Q \mid pt = q\}|$.
A \emph{cycle} in $t$ is a sequence of states $q_1,\ldots,q_k$ for $k \ge 2$ such that $q_i t = q_{i+1}$ for $i=1,\ldots,k-1$, and $q_k t = q_1$.
A \emph{fixed point} in $t$ is a state $q$ such that $q t = q$; we therefore do not call fixed points cycles.

A transformation that maps a subset $S$ to a state $q$ and fixes all the other states is denoted by $(S \to q)$. If $S$ is a singleton $\{p\}$ then we write shortly $(p \to q)$. A transformation that has a cycle $q_1,\ldots,q_k$ and fixes all the other states is denoted by $(q_1,\ldots,q_k)$.

A \emph{nondeterministic finite automaton} (\emph{NFA}) is a tuple $\mathcal{N} = (Q,\Sigma,\delta,I,F)$, where $Q$, $\Sigma$, and $F$ are defined as in a DFA, $I$ is the set of \emph{initial states}, and $\delta\colon Q \times \Sigma \cup \{\varepsilon\} \to 2^Q$ is the transition function.

\subsection{Most complex languages}

A \emph{stream} is a sequence $(L_k,L_{k+1},\dots)$ of regular languages in some class, where $n$ is the state complexity of $L_n$.
A \emph{dialect} $L'_n$ of a language $L_n$ is a language that differs only slightly from $L_n$. There are various types of dialects, depending what changes are allowed.
A \emph{permutational dialect} (or \emph{permutationally equivalent dialect}) is a language in which letters may be permuted or deleted.
Let $\pi\colon \Sigma \to \Sigma$ be a partial permutation.
If $L_n(a_1,\ldots,a_k)$ is a language over the alphabet $\Sigma=\{a_1,\ldots,a_k)$, then we write $L_n(\pi(a_1),\ldots,\pi(a_k))$ for a language in which a letter $a_i$ is replaced by $\pi(a_i)$. In the case a letter $a_i$ is removed, so not defined by $\pi(a_i)$, we write $\pi(a_i)=\_$.
For example, if $L=\{a,ab,abc\}$, then $L(b,a,\_)=\{b,ba\}$.

A stream is \emph{most complex} in its class if all their languages and all pairs of languages together with their dialects meet all the bounds for the state complexities of basic operations, the syntactic complexity, the number and the complexities of atoms.
Note that binary operations were defined for languages with the same alphabets.
Therefore, if the alphabet is not constant in the stream, to meet the bounds for binary Boolean operations, for every pair of languages we must use their dialects that restrict the alphabet to be the same.

Sometimes we restrict only to some of these measures.
In some cases, this allows to provide an essentially simpler stream over a smaller alphabet when we are interested only in those measures.
In particular, if a syntactic complexity requires a large alphabet and for basic operations it is enough to use a constant number of letters, it is desirable to provide a separate stream which is most complex just for basic operations.

Dialects are necessary for most complex streams of languages, since otherwise they would not be able to meet upper bounds in most classes. In particular, since $L_n \cup L_n = L_n$, the state complexity of union would be at most $n$ in this case.
Other kinds of dialects are possible (e.g.~\cite{BrSi16PrefixConvex}), though permutational dialects are the most restricted.

\subsection{Bifix-free languages}

A language $L$ is \emph{prefix-free} if there are no words $u,v \in \Sigma^+$ such that $uv \in L$ and $u \in L$. A language $L$ is \emph{suffix-free} if there are no words $u,v \in \Sigma^+$ such that $uv \in L$ and $v \in L$.
A language is \emph{bifix-free} if it is both prefix-free and suffix-free.

The following properties of minimal DFAs recognizing prefix-free, suffix-free, and bifix-free languages, adapted to our terminology, are well known (see e.g.~\cite{BrJiLiSm14QuotientComplexityOfBifixFactorSubwordFree,BLY12,CmJi12,SzWi16SyntacticComplexityOfBifixFree}):

\begin{lemma}\label{lem:bifix-free_properties}
Let $\mathcal{D}_n(Q,\Sigma,\delta,0,F)$ be a minimal DFA recognizing a  non-empty language $L$. Then $L$ is bifix-free if and only if:
\begin{enumerate}
\item[1] There is an empty state, which is $n-1$ by convention, that is, $n-1$ is not final and $(n-1)\delta_a = n-1$ for all $a \in \Sigma$.
\item[2] There exists exactly one final state, which is $n-2$ by convention, and its quotient is $\{\varepsilon\}$; thus $(n-2)\delta_a = n-1$ for all $a \in \Sigma$.
\item[3] For $u\in \Sigma^+$ and $q \in Q \setminus \{0\}$, if $q \delta_u \neq n-1$, then $0 \delta_u \neq q \delta_u$.
\end{enumerate}
The conditions~(1) and~(2) are sufficient and necessary for a prefix-free languages, and the conditions~(1) and~(3) are sufficient and necessary for a suffix-free language.
\end{lemma}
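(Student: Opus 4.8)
The plan is to factor the claim through its two halves and then assemble them. Since a language is bifix-free exactly when it is both prefix-free and suffix-free, and both component characterizations share condition~(1), it suffices to prove the two asserted equivalences separately: that prefix-freeness is equivalent to~(1) and~(2), and that suffix-freeness is equivalent to~(1) and~(3). The conjunction then yields conditions~(1), (2), and~(3) with no further work. Throughout I would exploit minimality in its two guises: every state is reachable, so I may speak interchangeably of states and the words reaching them, and distinct states have distinct quotients, so any two states accepting the same continuations must coincide. I write $K_q$ for the quotient of a state $q$ as in the preliminaries, so that $q$ is final iff $\varepsilon \in K_q$ and $q$ is empty iff $K_q = \emptyset$.

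For the prefix-free equivalence I would argue as follows. In the forward direction, fix any final state $f$, reachable by a word $u \in L$; if some nonempty word $v$ satisfied $f\delta_v \in F$, then $u$ and $uv$ would both lie in $L$ with $v$ nonempty, contradicting prefix-freeness. Hence no nonempty word lies in $K_f$, i.e.\ $K_f = \{\varepsilon\}$. By distinguishability at most one state can have this quotient, and nonemptiness of $L$ forces exactly one final state, which I label $n-2$; since $(n-2)\delta_a$ then has empty quotient for every $a$, an empty state exists, is unique by distinguishability, is non-final, and carries self-loops, giving~(1) and~(2). For the converse, assuming~(1) and~(2), any $u \in L$ reaches the unique final state $n-2$, after which every nonempty continuation reaches the absorbing empty state $n-1$ and is rejected; thus no word of $L$ is a proper prefix of another, which is prefix-freeness.

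For the suffix-free equivalence I would first record two consequences of suffix-freeness: the initial state is \emph{non-returning} (a nonempty word mapping $0$ back to $0$ would, composed with any accepted word, produce an accepted word having a proper accepted suffix), and an empty state exists (absent one, every state reaches $F$ along a nonempty word, and chasing such words from $0$ again manufactures a forbidden suffix pair). Condition~(3) itself I would obtain by contraposition: if some nonempty $u$ and some $q \neq 0$ had $0\delta_u = q\delta_u \neq n-1$, I would pick a word $w$ carrying this common live state into $F$ and a nonempty word $x$ with $0\delta_x = q$ (available because $q \neq 0$), whence $uw \in L$ and $xuw \in L$ exhibit $uw$ as a proper accepted suffix. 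Conversely, from~(1) and~(3) I would suppose suffix-freeness fails, take nonempty $u,v$ with $v, uv \in L$, set $q = 0\delta_u$, and feed $v$ from both $0$ and $q$: condition~(3), together with the non-returning property to cover the case $q = 0$, forces $0\delta_v$ and $q\delta_v$ to agree only at $n-1$, which is impossible since both are final.

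The hard part will be the suffix-free direction, and specifically the interface between the single pairwise statement~(3) and the global suffix-free property. Because~(3) quantifies only over $q \neq 0$, it does not by itself detect a nonempty word that maps the initial state into itself, so in the converse I must separately guarantee that~(1), minimality, and~(3) preclude such a return, equivalently folding the non-returning property into the argument; establishing the existence of the empty state from suffix-freeness likewise rests on this non-returning observation rather than being immediate, as it was in the prefix-free case. Once the non-returning behaviour of $0$ is pinned down, the remaining reasoning is the routine translation between accepted words and coincidences of state images described in the two directions above.
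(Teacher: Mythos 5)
The paper never proves this lemma --- it is quoted as a known fact with citations --- so there is no in-paper argument to compare against; your proposal must stand on its own. Its skeleton is reasonable: splitting into the prefix-free and suffix-free equivalences is the natural route, your prefix-free half is correct, and your necessity arguments for the suffix-free half (non-returning, existence of an empty state, condition~(3)) are essentially sound. One repairable flaw there: ``chasing'' accepting words forward from $0$ produces pairs $u, uv \in L$, i.e.\ \emph{prefix} violations, not suffix violations; to get the empty state you need instead a pigeonhole argument, e.g.\ on the states $0\delta_{w^i}$ for a fixed $w \in L$, which yields two words, one a proper suffix of the other, reaching the same co-accessible state.

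The genuine gap is the converse direction for suffix-freeness, and it cannot be closed, because that implication is false as stated. First, your intended contradiction is a non-sequitur: from $v \in L$ and $uv \in L$ you only get that $0\delta_v$ and $q\delta_v$ are both final; condition~(3) then says they are \emph{distinct}, which contradicts nothing, since (1) and (3) do not force a unique final state. Concretely, the minimal DFA of $\{a,aa\}$ (a path $0 \to 1 \to 2 \to 3$ with $F = \{1,2\}$ and dead state $3$) is even non-returning and satisfies (1) and (3), yet $\{a,aa\}$ is not suffix-free; state collisions are strictly weaker than simultaneous acceptance once there are two final states (and suffix-free languages can have two, e.g.\ $\{b,ba\}$). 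Second, the non-returning property that you concede must be ``folded in'' from (1), (3) and minimality is not derivable even if (2) is added: the minimal three-state DFA of $a^*b$ (an $a$-loop at $0$, $b$ into the final state, all other transitions to the dead state) satisfies (1), (2) and --- vacuously --- (3), yet $0\delta_a = 0$ and $a^*b$ is prefix-free but not suffix-free. This example also falsifies the lemma's ``if'' direction as printed. A correct statement must add non-returning ($0\delta_u \neq 0$ for all $u \in \Sigma^+$) as a hypothesis and must invoke (2) in the suffix-freeness argument, so that ``accepted'' means ``equals $n-2$'' and simultaneous acceptance becomes a collision killed by (3); with those repairs your outline proves the bifix-free equivalence, but the stand-alone claim that (1) and (3) characterize suffix-freeness remains false.
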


It follows that a minimal DFA recognizing a non-empty bifix-free language must have at least $n \ge 3$ states.

Since states $0$, $n-2$, and $n-1$ are special in the case DFAs of bifix-free languages, we denote the remaining ``middle'' states by $Q_M = \{1,\ldots,n-3\}$.
Condition~3 implies that suffix-free and so bifix-free are \emph{non-returning} (see~\cite{EHJ2016NonReturning}), that is, there is no non-empty word $w \in \Sigma^+$ such that $L.w = L$.

Note that in the case of unary languages, there is exactly one bifix-free language for every state complexity $n \ge 3$, which is $\{a^{n-2}\}$.
The classes of unary prefix-free, unary suffix-free, and unary bifix-free languages coincide and we refer to it as to \emph{unary free} languages.

The state complexity of basic operations on bifix-free languages was studied in~\cite{BrJiLiSm14QuotientComplexityOfBifixFactorSubwordFree}, where different witness languages were shown for particular operations.

The syntactic complexity of bifix-free languages was shown to be $(n-1)^{n-3}+(n-2)^{n-3}+(n-3)2^{n-3}$ for $n \ge 6$ \cite{SzWi16SyntacticComplexityOfBifixFree}.
Moreover, the transition semigroup of a minimal DFA $\mathcal{D}_n$ of a witness language meeting the bound must be $\Wbf(n)$, which is a transition semigroup containing three types of transformations and can be defined as follows:
\begin{definition}[The largest bifix-free semigroup]\label{def:Wbf}
\begin{eqnarray*}
\Wbf(n) & = & \{t \in T(n) \mid\\
\text{(type~1)} & & \{0,n-2,n-1\}t = \{n-1\}\text{ and }Q_M t \subset Q_M \cup \{n-2,n-1\}\text{, or}\\
\text{(type~2)} & & 0t = n-2\text{ and }\{n-2,n-1\}t=\{n-1\}\text{ and }Q_M t \subset Q_M \cup \{n-1\}\text{, or}\\
\text{(type~3)} & & 0t \in Q_M\text{ and }\{n-2,n-1\}t=\{n-1\}\text{ and }Q_M t \subseteq \{n-2,n-1\}\,\}.
\end{eqnarray*}
\end{definition}
Following \cite{SzWi16SyntacticComplexityOfBifixFree}, we say that an unordered pair $\{p, q\}$ of distinct states from $Q_M$ is \emph{colliding} in $T(n)$ if there is a transformation $t \in T(n)$ such that $0t = p$ and $rt = q$ for some $r \in Q_M$.
A pair of states is \emph{focused} by a transformation $u \in T(n)$ if $u$ maps both states of the pair to a single state $r \in Q_M \cup \{n-2\}$.
It is known that (\cite{SzWi16SyntacticComplexityOfBifixFree}) in semigroup $\Wbf(n)$ there are no colliding pairs and every possible pair of states is focused by some transformation, and it is the unique maximal transition semigroup of a minimal DFA of a bifix-free language with this property.


\section{Complexity of bifix-free languages}

In this section we summarize and complete known results concerning state complexity of bifix-free regular languages.

We start from the obvious upper bound for the maximal complexity of quotients.

\begin{proposition}\label{pro:quotients_upper_bound}
Let $L$ be a bifix-free language with state complexity $n$.
Each (left) quotients of $L$ have state complexity at most $n-1$, except $L$, $\{\varepsilon\}$, and $\emptyset$, which always have state complexities $n$, $2$, and $1$, respectively.
\end{proposition}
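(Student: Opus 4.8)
The plan is to read each left quotient $K_q$ of $L$ as the language accepted by $\mathcal{D}_n = (Q,\Sigma,\delta,0,F)$ re-rooted at the state $q$, that is, the language of $(Q,\Sigma,\delta,q,F)$, and then to bound the number of states reachable from $q$ in that automaton. Since $\mathcal{D}_n$ is minimal, every state is reachable from $0$ and the quotients of $L$ are exactly the pairwise distinct sets $K_0 = L, K_1, \ldots, K_{n-1}$. I would dispose of the three exceptional quotients directly from Lemma~\ref{lem:bifix-free_properties} and treat all remaining quotients, those indexed by the middle states $Q_M$, by a single reachability argument.

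First I would handle the special quotients. The quotient $K_0 = L$ has state complexity $n$ by the assumed minimality of $\mathcal{D}_n$. By condition~(1), state $n-1$ is non-final with a self-loop on every letter, so $K_{n-1} = \emptyset$, whose minimal DFA has one state. By condition~(2), state $n-2$ is final and every letter sends it to the empty state $n-1$, so $K_{n-2} = \{\varepsilon\}$; the only states reachable from $n-2$ are $n-2$ and $n-1$, and they are distinguished by $\varepsilon$, so the state complexity is $2$.

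The key step is the bound for the quotients $K_q$ with $q \in Q_M$. I would show that state $0$ is unreachable from every state $q \neq 0$. Since $\mathcal{D}_n$ is minimal, there is a necessarily non-empty word $v$ with $0\delta_v = q$; if some word $w$ satisfied $q\delta_w = 0$, then $w$ would be non-empty (because $q \neq 0$) and $0\delta_{vw} = 0$ with $vw \in \Sigma^+$, contradicting the non-returning property that condition~(3) guarantees. Hence, when $\mathcal{D}_n$ is re-rooted at $q \in Q_M$, state $0$ cannot be reached, so at most $n-1$ states are reachable; as minimization never increases the number of states, the state complexity of $K_q$ is at most $n-1$, which is exactly the claimed bound.

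I do not expect a genuine obstacle: the whole argument rests on the non-returning property, which is already recorded in the preliminaries as a consequence of condition~(3). The only points requiring a little care are to isolate the single large quotient $L = K_0$ from the middle quotients, and to verify that the two degenerate quotients $\emptyset$ and $\{\varepsilon\}$ attain precisely the complexities $1$ and $2$ rather than merely the generic bound $n-1$.
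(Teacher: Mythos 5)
Your proof is correct and takes essentially the same approach as the paper: the paper's proof is a one-line appeal to the non-returning property (recorded as a consequence of condition~(3) of Lemma~\ref{lem:bifix-free_properties}), which is exactly the key step you spell out by showing that state $0$ is unreachable from any $q \neq 0$ in the re-rooted automaton. The additional detail you give for the exceptional quotients $L$, $\{\varepsilon\}$, and $\emptyset$ is left implicit in the paper but is accurate.
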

\begin{proof}
Since bifix-free languages are non-returning, their non-initial quotients have at most state complexity $n-1$.
\end{proof}

\subsection{Boolean operations}

In~\cite{BrJiLiSm14QuotientComplexityOfBifixFactorSubwordFree} it was shown that $mn-(m+n)$ (for $m,n \ge 4$) is a tight upper bound for the state complexity of union and symmetric difference of bifix-free languages, and that to meet this bound a ternary alphabet is required.
It was also shown there that $mn-3(m+n-4)$ and $mn-(2m+3n-9)$ (for $m,n \ge 4$) are tight upper bounds for intersection and difference, respectively, and that a binary alphabet is sufficient to meet these bounds.
Since the tight bound is smaller for unary free languages, the size of the alphabet cannot be reduced.

It may be interesting to observe that the alphabet must be essentially larger to meet the bounds in the case when $m=3$.
\begin{remark}
For $n \ge 3$, to meet the bound $mn-(m+n)$ for union or symmetric difference with minimal DFAs $\mathcal{D}'_3$ and $\mathcal{D}_n$ at least $n-2$ letters are required.
\end{remark}
\begin{proof}
For each $q \in \{1,\ldots,n-2\}$ state $(1',q)$ must be reachable in the product automaton. In $\mathcal{D}'_3$ state $0'$ is the only state that can be mapped to $1'$ by the transformation of some letter $a$. This means that at least $n-2$ different letters are required.
\end{proof}

\subsection{Product}

The tight bound for the product is $m+n-2$, which is met by unary free languages.
We show that there is no other possibly for the product of bifix-free languages, that is, $L_m L_n$ has always state complexity $m+n-2$.

\begin{theorem}\label{thm:product}
For $m \ge 3$, $n \ge 3$, for every bifix-free languages $L'_m$ and $L_n$, the product $L'_m L_n$ meets the bound $m+n-2$.
\end{theorem}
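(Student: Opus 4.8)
The plan is to build the standard product (concatenation) automaton from the minimal DFAs $\mathcal{D}'_m=(Q',\Sigma,\delta',0',\{m-2\})$ and $\mathcal{D}_n=(Q,\Sigma,\delta,0,\{n-2\})$ of $L'_m$ and $L_n$, to use the rigid structure of bifix-free DFAs from Lemma~\ref{lem:bifix-free_properties} to pin down exactly which states are reachable, and then to show that all reachable states carry pairwise distinct quotients. Since $m+n-2$ is already known to be the tight \emph{upper} bound for the product, it suffices to exhibit $m+n-2$ reachable and pairwise distinguishable states.

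First I would determinize the $\varepsilon$-NFA for $L'_mL_n$: a state is a pair $(p,S)$ with $p\in Q'$ and $S\subseteq Q$, and on reading $a$ we move to $(p\delta'_a,\,S\delta_a)$ and additionally insert the initial state $0$ of $\mathcal{D}_n$ into the second component whenever $p\delta'_a=m-2$ (the unique final state of $\mathcal{D}'_m$, whose quotient is $\{\varepsilon\}$). The key structural observation is that, because $m-2$ sends every letter to the empty state $m-1$, the first component can equal $m-2$ at most once along any run; consequently $S=\emptyset$ while $p\in\{0'\}\cup\{1,\dots,m-3\}$, $S=\{0\}$ exactly when $p=m-2$, and thereafter $S$ is a single token evolving deterministically in $\mathcal{D}_n$. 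This yields the candidate reachable states $(0',\emptyset)$, the $(p,\emptyset)$ for middle $p\in\{1,\dots,m-3\}$, the state $(m-2,\{0\})$, and the states $(m-1,\{q\})$. All of these are reachable by minimality of $\mathcal{D}'_m$ and $\mathcal{D}_n$ together with the non-returning property (state $0$ is reachable only by $\varepsilon$), the two states with empty quotient coincide after minimization, and the count is exactly $m+n-2$.

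Next I would read off the quotients: $(p,\emptyset)$ has quotient $K'_p\,L_n$ (where $K'_p$ is the quotient of state $p$ in $\mathcal{D}'_m$; prefix-freeness makes the split of an accepted word unique), $(m-2,\{0\})$ has quotient $L_n=K_0$, and $(m-1,\{q\})$ has quotient $K_q$, and then prove these $m+n-2$ languages are pairwise distinct. Distinctness among the $K_q$-type quotients is immediate from minimality of $\mathcal{D}_n$. For the product-type quotients I would prove a right-cancellation lemma: if $K_1,K_2$ are prefix-free and $K_1L_n=K_2L_n$, then $K_1=K_2$. This follows by taking a shortest word $u$ in the symmetric difference (say $u\in K_1\setminus K_2$) and a shortest $v\in L_n$, writing $uv=u'v'$ with $u'\in K_2$, $v'\in L_n$, and ruling out $|u'|=|u|$ (gives $u\in K_2$), $|u'|<|u|$ (gives a shorter word $u'\in K_2\setminus K_1$ by prefix-freeness of $K_1$), and $|u'|>|u|$ (forces $v'\in L_n$ shorter than $v$). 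Combined with minimality of $\mathcal{D}'_m$ this separates the product-type quotients from one another.

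The main obstacle is separating a product-type quotient $K'_p\,L_n$ from a $K_q$-type quotient. Here the plan is to argue by contradiction: if $K'_p\,L_n=K_q$, pick any $w=st$ with $s\in K'_p$ (so $s\in\Sigma^+$, since $p$ is non-final) and $t\in L_n$; then $w\in K_q$, i.e.\ $q\delta_w=n-2$, so setting $q'=q\delta_s$ we get $q'\delta_t=n-2=0\delta_t$ with $t\in\Sigma^+$. By condition~(3) of Lemma~\ref{lem:bifix-free_properties} this is impossible unless $q'=0$, hence $q\delta_s=0$ for the non-empty word $s$; composing with a word $x$ reaching $q$ from $0$ in $\mathcal{D}_n$ gives $0\delta_{xs}=0$ with $xs\in\Sigma^+$, contradicting the non-returning property. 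Establishing this clean use of condition~(3) together with non-returning is the crux; once it is in place, all $m+n-2$ quotients are distinct, so the minimal DFA of $L'_mL_n$ has exactly $m+n-2$ states and the product meets the bound.
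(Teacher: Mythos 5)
Your proposal is correct, and it coincides with the paper's proof only up to the point where the real work begins: you use the same $\varepsilon$-NFA/determinization construction and exhibit exactly the same $m+n-2$ reachable states, namely $(p,\emptyset)$ for $p\in\{0',1',\ldots,(m-3)'\}$, $((m-2)',\{0\})$, and $((m-1)',\{q\})$ for $q\in\{1,\ldots,n-1\}$, with essentially the same reachability argument. Where you genuinely diverge is distinguishability. The paper stays inside the product automaton and constructs distinguishing words: for a pair $(p_1,\emptyset),(p_2,\emptyset)$ it takes a \emph{shortest} word separating $p_1,p_2$ in $\mathcal{D}'_m$, uses prefix-freeness to rule out intermediate acceptance, and thereby reduces every pair of states to a canonical comparison against $((m-2)',\{0\})$, which is settled by words of $\mathcal{D}_n$. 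You instead identify the quotients of the reachable states as explicit languages ($K'_pL_n$, $L_n$, and $K_q$) and prove these $m+n-2$ languages pairwise distinct: the product-type quotients are separated from one another by a right-cancellation lemma for prefix-free languages ($K_1L=K_2L$ with $K_1,K_2$ prefix-free and $L\neq\emptyset$ forces $K_1=K_2$, via the shortest-word-in-the-symmetric-difference argument, which is sound; note that you need \emph{both} $K_1$ and $K_2$ prefix-free, which holds because quotients of a prefix-free language are prefix-free), and the product-type quotients are separated from the $K_q$ by condition~(3) of Lemma~\ref{lem:bifix-free_properties} together with non-returning (also sound; the degenerate targets $K_{n-1}=\emptyset$ and $K_{n-2}=\{\varepsilon\}$ are excluded immediately since $K'_pL_n$ is non-empty and contains only words of length at least $2$). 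The trade-off: the paper's route is constructive and self-contained in automaton terms, producing explicit distinguishing words for each pair of states; your route isolates a reusable algebraic fact (prefix codes are right-cancellable under concatenation) and makes the role of each hypothesis more transparent --- prefix-freeness of $L'_m$ drives cancellation, while suffix-freeness of $L_n$ keeps the product quotients disjoint from the quotients of $L_n$ itself.
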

\begin{proof}
Let $\mathcal{D}_m' = (Q',\Sigma,\delta',0',\{(n-2)'\})$ and $\mathcal{D}_n = (Q,\Sigma,\delta,0,\{n-2\})$ be minimal DFAs for $L_m'$ and $L_n$, respectively. 
We use the well known construction for an NFA $\mathcal{N}$ recognizing the product of two regular languages.
Then $Q' \cup Q$ is the set of states, $0'$ is initial and $n-2$ is a final state.
We have $\delta^{\mathcal{N}}$ being the transition function such that $\delta^{\mathcal{N}}(p,a) = \{q\}$ whenever $\delta'(p,a)=q$ for $p,q \in Q'$, or $\delta(p,a)=q$ for $p,q \in Q$.
Also, we have the $\varepsilon$-transition $\delta^{\mathcal{N}}((m-2)',\varepsilon) = \{0\}$.
We determinize $\mathcal{N}$ to $\mathcal{D}_P$; since every reachable subset has exactly one state from $Q'$, we can assume that the set of states is $Q' \times 2^Q$, so $\mathcal{D}_P=(Q' \times 2^Q,\Sigma,\delta^P,\{0'\},Q' \times \{\{n-2\}\})$.

Since $m+n-2$ is an upper bound for product, it is enough to show that at least $m+n-2$ states are reachable and pairwise distinguishable in $\mathcal{D}_P$.
We show that the states $((n-2)',\{0\})$, $(p,\emptyset)$ for each $p \in \{0',\ldots,(m-3)'\}$, and $((m-1)',\{q\})$ for each $q \in \{1,\ldots,n-1\}$ are reachable and distinguishable.
Let $R$ be the set of these states.
From this place in the context of reachablitity and distinguishability we will consider only the states from $R$.

Since $L_{m'}$ is prefix-free and $\mathcal{D}_m'$ is minimal, every state $(p,\emptyset) \in R$ (where $p \in \{0',\ldots,(m-3)'\}$) is reached from $(0',\emptyset)$ by a word reaching state $p$ in $\mathcal{D}_m'$.
Furthermore, state $((m-2)',\{0\})$ is reached by a word $w'$ from non-empty language $L'_m$.
Every state $((m-1)',\{q\}) \in R$ (where $q \in \{1,\dots,n-1\}$) may be reached from state $((m-2)',\{0\})$ by a word $w$ such that $0\delta_w=q$, and hence by $w'w$ in $\mathcal{D}$.
It remains to show distinguishability.

Consider two distinct states $(r,\{q_1\})$ and $(r,\{q_2\})$ from $R$; then $r \in \{(m-2)',(m-1)'\}$ and $q_1,q_2 \in Q$.
These states are distinguishable by a word distinguishing $q_1$ and $q_2$ in $\mathcal{D}_n$.

Consider two states $(p,\emptyset)$ and $(r,\{q\})$ from $R$.
There exists a word $w$ such that $(p,\emptyset) \delta^P_w = ((m-2)',\{0\})$.
We have $(r,\{q\}) \delta^P_w = (r \delta'_w,\{q \delta_w\}) = ((m-1)',\{q \delta_w\})$.
Because $0$ is reachable only by the empty word in $\mathcal{D}_n$ since $L_n$ is suffix-free, we have $q \delta_w \neq 0$.
Then $((m-2)',\{0\})$ and $((m-1)',\{q \delta_w\})$ are distinguishable by our earlier considerations.

Finally, consider two states $(p_1,\emptyset)$ and $(p_2,\emptyset)$ from $R$.
There exists a word $w$ which distinguishes $p_1$ and $p_2$ in $\mathcal{D}'_m$. Let $w$ be a shortest such word.
Then, without loss of generality, $p_1 \delta'_w = (m-2)'$ and $p_2 \delta'_w \neq (m-2)'$.
Since $\mathcal{D}_m'$ is prefix-free, for every proper prefix $v$ of $w$ we have $p_1 \delta'_v \neq (m-2)'$, and since $w$ is shortest, we have $p_2 \delta'_v \neq (m-2)'$.
Then $(p_1,\emptyset) \delta^P_w = ((m-2)',\{0\})$ and $(p_2,\emptyset) \delta^P_w = (p_2 \delta'_w,\emptyset)$.
If $p_2 \delta'_w \in \{1',\ldots,(m-3)'\}$ then $((m-2)',\{0\})$ and $(p_2 \delta'_w,\emptyset)$ are distinguishable by our earlier considerations.
Otherwise $p_2 \delta'_w$ must be $(m-1)'$, and since $((m-1)',\emptyset)$ is equivalent to $((m-1)',n-1))$, it is also distinguishable from $((m-2)',\{0\})$.
\end{proof}

\subsection{Star}

The tight bound for star is $n-1$, which is met by binary bifix-free languages \cite{BrJiLiSm14QuotientComplexityOfBifixFactorSubwordFree}.
Here we provide a complete characterization for the state complexity of $L_n^*$ and show that there are exactly two possibilities for its state complexity: $n-1$ and $n-2$.
This may be compared with prefix-free languages, where there are exactly three possibilities for the state complexity $L_n^*$: $n$, $n-1$, and $n-2$ \cite{JiPaMa14}.

\begin{theorem}\label{thm:star}
Let $n \ge 3$ and let $\mathcal{D}_n=(Q,\Sigma,\delta,\{n-2\},0)$ be a minimal DFA of a bifix-free language $L_n$.
If the transformation of some $a \in \Sigma$ maps some state from $\{0,\ldots,n-3\}$ to $n-1$, then $L_n^*$ has state complexity $n-1$. Otherwise it has state complexity $n-2$.
\end{theorem}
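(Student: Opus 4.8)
The plan is to build the standard $\varepsilon$-NFA $\mathcal{N}$ for $L_n^*$ from $\mathcal{D}_n$, determinize it, and count the reachable and pairwise distinguishable subsets. Concretely, I would keep the states $Q$ and all transitions of $\mathcal{D}_n$, add a single $\varepsilon$-transition $n-2 \xrightarrow{\varepsilon} 0$, take $\{0\}$ as the only initial set, and declare both $0$ and $n-2$ final so that $\varepsilon$ is accepted. The point of using the specific shape of a bifix-free DFA is that, by Lemma~\ref{lem:bifix-free_properties}, the unique final state $n-2$ has quotient $\{\varepsilon\}$, so $(n-2)\delta_a = n-1$ for every $a$: once a factor of $L_n$ has been completed, reading one more letter on the ``old'' copy is immediately fatal, and the only way to continue is through the $\varepsilon$-edge back to $0$. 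I would denote the determinized DFA by $\mathcal{D}_n^*$ and analyze which subsets of $Q$ it reaches.

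The heart of the argument is a structural claim: \emph{every reachable subset, after discarding the sink $n-1$, is one of $\{0\}$, a singleton $\{q\}$ with $q \in Q_M$, the pair $\{0,n-2\}$, or the empty set.} I would prove this by induction on the length of the input word, checking the successors of $\{0\}$, of a middle singleton $\{q\}$, and of $\{0,n-2\}$: in each case $n-2$ (if present) is sent to $n-1$, the remaining live state $r\in\{0\}\cup Q_M$ is sent to $r\delta_a\in Q_M\cup\{n-2,n-1\}$, and when $r\delta_a=n-2$ the $\varepsilon$-closure re-adds $0$, producing $\{0,n-2\}$. Crucially, suffix-freeness makes $\mathcal{D}_n$ non-returning, so $0$ receives no incoming transition under a nonempty word (otherwise $L_n.w=L_n$); this is exactly what prevents two distinct live states from ever coexisting. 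Two easy reductions then finish the bookkeeping: the sink $n-1$ is a non-final fixed point, so $S'\cup\{n-1\}$ accepts the same right language as $S'$; and $\{0\}$ and $\{0,n-2\}$ have the same finality and, by the previous remark, the same successor under every letter, so they collapse to a single state $S$ (the quotient $L_n^*$).

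It remains to pin down which of the classes $S$, the middle singletons $\{q\}$ for $q\in Q_M$, and the dead class $\{n-1\}$ are actually reached. For a middle state $q$ I would take a word $w$ with $0\delta_w=q$ in $\mathcal{D}_n$; since any path from $0$ to $q\in Q_M$ cannot pass through $n-2$ (from $n-2$ every letter leads to $n-1\notin Q_M$), the $\varepsilon$-closure is never triggered along $w$, and $\{0\}\delta^*_w=\{q\}$, so all $n-3$ middle singletons are reachable. The dead class $\{n-1\}$ is reached exactly when some reachable live state $r\in\{0\}\cup Q_M=\{0,\dots,n-3\}$ has $r\delta_a=n-1$ for some letter $a$; since $0$ and all middle states are reachable as live states, this is precisely the hypothesis of the theorem. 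Thus the reachable set consists of $S$ together with the $n-3$ middle singletons, plus the dead class iff the stated condition holds, giving $n-2$ or $n-1$ states respectively.

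For distinguishability I would observe that $S$ is the unique final class (all middle singletons and the dead class are non-final), so $\varepsilon$ separates $S$ from everything else; two distinct middle singletons $\{p\},\{q\}$ are separated by a word $x$ witnessing their distinguishability in $\mathcal{D}_n$, because the state reaching $n-2$ becomes final while the other is sent to a non-final class---here I again use that no nonempty word maps a state to $0$, so $q\delta_x\neq 0$ and $\{q\delta_x\}$ stays non-final; and any live state is separated from the dead class by a word driving it to $n-2$. This shows the reachable states are pairwise distinguishable, so the counts $n-2$ and $n-1$ are exact. I expect the main obstacle to be the structural claim, i.e.\ verifying rigorously that a second live middle state can never appear: this is exactly where non-returning (suffix-freeness) and the $\{\varepsilon\}$-quotient of $n-2$ (prefix-freeness) must be combined, and it is what ultimately forces only the single extra dead state to govern the dichotomy.
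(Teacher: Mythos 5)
Your construction and overall route are the same as the paper's: build the standard $\varepsilon$-NFA for $L_n^*$, determinize, observe that every reachable subset is (up to the sink $n-1$) either $\{0\}$, $\{0,n-2\}$, a singleton $\{q\}$ with $q \in Q_M$, or dead, merge the equivalent classes, and obtain the dichotomy $n-2$ versus $n-1$ according to whether some state of $\{0,\ldots,n-3\}$ is mapped to $n-1$ by a letter. The reachability analysis and the counting are fine. However, there is a genuine gap in your distinguishability argument for two middle singletons $\{p\}$ and $\{q\}$. You take an \emph{arbitrary} word $x$ witnessing distinguishability of $p$ and $q$ in $\mathcal{D}_n$ (say $p\delta_x = n-2$, $q\delta_x \neq n-2$) and conclude that $\{q\}$ is sent to the non-final class of $\{q\delta_x\}$. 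This implicitly assumes that the $\varepsilon$-closure is never triggered along $q$'s computation, i.e.\ that $q\delta_v \neq n-2$ for every proper prefix $v$ of $x$. Nothing in your argument guarantees this: if $q\delta_v = n-2$ for some proper prefix $v$, the subset becomes $\{0,n-2\}$ and from then on tracks $0$'s trajectory with possible restarts; the class reached after $x$ is then not the class of $\{q\delta_x\}$ at all, and a priori it could even be final (namely when the remaining suffix of $x$ lies in $L_n L_n^*$), in which case $x$ would fail to separate the two states. Note the asymmetry: on the accepting side the premature visit is excluded automatically ($p\delta_v = n-2$ would force $p\delta_x = n-1$), but on the rejecting side it is not.

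The paper closes exactly this hole by choosing $x$ to be a \emph{shortest} distinguishing word: then $q\delta_v = n-2$ for a proper prefix $v$ is impossible, since $p\delta_v \neq n-2$ would make $v$ a shorter distinguishing word. With that single extra stipulation (plus your correct use of suffix-freeness to get $q\delta_x \neq 0$), your argument goes through verbatim. One can alternatively prove that for bifix-free languages \emph{any} distinguishing word works, but that requires a further argument combining Condition~3 of Lemma~\ref{lem:bifix-free_properties} with the fact that no transition of a minimal suffix-free DFA enters state $0$; your proposal contains neither this nor the shortest-word choice. Incidentally, you flag the structural claim about reachable subsets as the expected main obstacle, but that part is routine (it follows from determinism and $(n-2)\delta_a = n-1$ alone, not from non-returning); the real subtlety of this theorem sits precisely in the step you left unjustified.
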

\begin{proof}
Let $\mathcal{N}=(Q,\Sigma,\delta_\mathcal{N},\{0\},\{0\})$ be the NFA obtained from $\mathcal{D}_n$ by the standard construction for $L_n^*$:
We have $\delta^\mathcal{N}(p,a) = \{q\}$ whenever $\delta(p,a) = q$, and there is the $\varepsilon$-transition $\delta^\mathcal{N}(n-2,\varepsilon) = \{0\}$.
Let $\mathcal{D}_S=(2^Q,\Sigma,\delta^S,\{0\},\{0\} \cup 2^{Q \setminus \{0\}})$ be the DFA obtained by the powerset construction from $\mathcal{N}$.

Since in $\mathcal{D}_n$ only $n-2$ is final and the transformation of every letter maps it to empty state $n-1$, we know that only the subsets of the forms $\{q\}$, $\{q,n-1\}$, $\{0,n-2\}$, and $\{0,n-2,n-1\}$, where $q \in Q$, are reachable in $\mathcal{D}_S$.
Since $n-1$ is empty, subsets $\{q\}$ and $\{q,n-1\}$ with $q \in \{0,\ldots,n-2\}$ are equivalent, and since subsets with $0$ are final, $\{0\}$, $\{0,n-2\}$, and $\{0,n-2,n-1\}$ are equivalent.

First observe that every subset $\{q\}$ with $q \in \{0,\ldots,n-3\}$ is reachable in $\mathcal{D}_S$ by a word reaching $q$ in $\mathcal{D}_n$.
Also, if $n-1$ is reachable from some state $q \in \{0,\ldots,n-3\}$ in $\mathcal{D}_n$ by the transformation of some letter, then $\{n-1\}$ can be reached in $\mathcal{D}_S$ from $\{q\}$ by the transformation of this letter.
Otherwise, in $\mathcal{D}_S$, for all words $w$ and all subsets $S$ containing $q \in \{0,\ldots,n-3\}$ we know that $S \delta^S_w$ also contains a state from $\{0,\ldots,n-3\}$; thus subset $\{n-1\}$ cannot be reached.

Let $p,q \in \{0,\ldots,n-3,n-1\}$ be two distinct states; we will show that $\{p\}$ and $\{q\}$ are distinguishable in $\mathcal{D}_S$.
They are distinguishable in $\mathcal{D}_n$, which means that there exists a word $w \ne \varepsilon$ such that exactly one of the states $p \delta_w$ and $q \delta_w$ is final state $n-2$.
Let $w$ be a shortest such word, and without loss of generality assume that $p \delta_w = n-2$.
For any non-empty proper prefix $v$ of $w$, we have $p \delta_v \ne n-2$ because $L_n$ is prefix-free, $q \delta_v \ne 0$ because $L_n$ is suffix-free, and $q \delta_v \ne n-2$ because $w$ is shortest.
Hence, in $\mathcal{D}_S$, $\{p\} \delta^S_w = \{0,n-2\}$ and $\{q\} \delta^S_w = \{r\}$ with $r \in \{1,\ldots,n-3\}$.
Thus $w$ distinguishes both subsets.
\end{proof}

\subsection{Reversal}

For the state complexity of a reverse bifix-free language, it was shown in~\cite[Theorem 6]{BrJiLiSm14QuotientComplexityOfBifixFactorSubwordFree} that for $n \ge 3$ the tight upper bound is $2^{n-3}+2$, and that a ternary alphabet is sufficient.
We show that the alphabet size cannot be reduced, and characterize the transition semigroup of the DFAs of witness languages.

\begin{theorem}
For $n \ge 6$, to meet the bound $2^{n-3}+2$ for reversal, a witness language must have at least three letters.
Moreover, for $n \ge 5$ the transition semigroup $T(n)$ of a minimal DFA $\mathcal{D}_n(Q,\Sigma,\delta,0,\{n-2\})$ accepting a witness language must be a subsemigroup of $\Wbf(n)$.
\end{theorem}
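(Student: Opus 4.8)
The plan is to reduce both claims to the combinatorics of the subset automaton obtained by reverse-determinization, and then exploit the rigid structure forced by Lemma~\ref{lem:bifix-free_properties}.

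First I would set up the reduction. Reversing $\mathcal{D}_n$ and determinizing, the reachable states of the resulting DFA for $L^R$ are exactly the sets $F\delta_w^{-1} = \{q \in Q : q\delta_w = n-2\}$ for $w \in \Sigma^*$, so $\kappa(L^R)$ is at most their number. Two structural facts pin down which sets occur. (i) No state maps to $0$ under any letter: if $q\delta_a = 0$, then writing $q = 0\delta_v$ the word $va \ne \varepsilon$ would satisfy $L.(va) = L$, contradicting non-returning. (ii) If $0 \in F\delta_w^{-1}$, i.e.\ $w \in L$, then $F\delta_w^{-1} = \{0\}$: a middle state $q$ with $q\delta_w = n-2$, reached by a nonempty $v$, would give $vw \in L$ with $w \in L$, violating suffix-freeness. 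Since $(n-2)\delta_a = (n-1)\delta_a = n-1$, every $F\delta_w^{-1}$ with $w \ne \varepsilon$ and $w \notin L$ is a subset of $Q_M$. Hence the reachable subsets lie among $\{n-2\}$, $\{0\}$, and the subsets of $Q_M$, so $\kappa(L^R) \le 2 + 2^{n-3}$, and meeting the bound forces \emph{every} subset of $Q_M$ to occur as some $F\delta_w^{-1}$.

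For the semigroup claim I would classify each letter $a$ by the image $0\delta_a$, which by (i) and non-returning lies in $\{n-2, n-1\} \cup Q_M$. If $0\delta_a = n-1$, then $a$ is of type~1 (the only constraint, $Q_M\delta_a \subseteq Q_M \cup \{n-2,n-1\}$, holds by (i)). If $0\delta_a = n-2$, then condition~3 of Lemma~\ref{lem:bifix-free_properties} applied to $u = a$ forbids any middle state from also mapping to $n-2$, so $a$ is of type~2. The only nontrivial case is $0\delta_a = p \in Q_M$, where I must show $Q_M\delta_a \subseteq \{n-2,n-1\}$. Suppose instead $q\delta_a = r \in Q_M$ for some $q \in Q_M$ (so $p \ne r$ by condition~3). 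Then for any reachable $S \subseteq Q_M$ with $p \in S$ the set $S\delta_a^{-1}$ is reachable and contains $0$, hence equals $\{0\}$ by (ii), so $q \notin S\delta_a^{-1}$, i.e.\ $r \notin S$. Thus no reachable subset of $Q_M$ contains both $p$ and $r$; in particular $\{p,r\}$ is not reachable, contradicting that all subsets of $Q_M$ occur. Hence $a$ is of type~3, every generator lies in $\Wbf(n)$, and since $\Wbf(n)$ is closed under composition, $T(n) \subseteq \Wbf(n)$.

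For the alphabet lower bound I would suppose a binary witness $\{a,b\}$ meets the bound for $n \ge 6$ (so $m := n-3 \ge 3$) and track which subsets of $Q_M$ are reachable. On subsets $S \subseteq Q_M$, a type-1 or type-2 letter acts by the preimage map $g_a(S) = S\delta_a^{-1} \cap Q_M$, which is GF(2)-linear and \emph{functional} (precomposition with the partial map $\delta_a|_{Q_M}$); a type-3 letter collapses every such $S$ to $\{0\}$ or $\emptyset$, and a middle subset is first entered only by applying a type-1 or type-3 letter to $\{n-2\}$. Since reaching any middle state at all requires $0\delta_x \in Q_M$, at least one letter must be of type~3, and I then do a short case analysis on the type of the second letter. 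If both are type~3 at most two middle subsets occur. Otherwise there is a single dynamics map $g$ and at most two entry subsets, and the reachable middle subsets form the union of their $\langle g\rangle$-orbits. Here the functional dichotomy is decisive: if $\delta_a|_{Q_M}$ is a bijection, then $g$ is a coordinate permutation and preserves cardinality, so two orbits meet at most $2\binom{m}{\lfloor m/2\rfloor} < 2^m$ sets; otherwise $g$ is non-surjective, so $\mathrm{im}(g)$ has dimension at most $m-1$ and every reachable middle subset except the $\le 2$ seeds lies in it, giving at most $2 + 2^{m-1} < 2^m$ sets. Either way fewer than $2^{n-3}$ subsets of $Q_M$ are reachable, so $\kappa(L^R) < 2^{n-3}+2$, a contradiction.

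The main obstacle is the last paragraph: recognizing that the preimage maps are functional rather than arbitrary linear maps (so that an invertible one is merely a coordinate permutation, far from generating $\mathrm{GL}(m,2)$), and packaging the two resulting estimates so that the inequalities become strict exactly at $m = n-3 \ge 3$. The semigroup part, by contrast, reduces to the clean observation that a colliding letter permanently separates the pair $\{p,r\}$ among reachable preimages.
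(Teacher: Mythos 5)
Your proof is correct, but it takes a genuinely different route from the paper's in both halves. For the semigroup claim, the paper shows only that every pair of middle states is \emph{focused} (since every pair $\{p,q\}\subseteq Q_M$ must arise as $\{n-2\}t^{-1}$) and then invokes the cited result that $\Wbf(n)$ is the unique maximal transition semigroup with this property; you instead classify each letter directly by the image of $0$ and rule out a ``colliding'' letter (one mapping $0$ into $Q_M$ while some middle state also lands in $Q_M$) by showing the pair $\{p,r\}$ could then never appear among the reachable preimages --- this is a self-contained, elementary replacement for the black-box maximality citation (you do still assert closure of $\Wbf(n)$ under composition, which is easy to check and implicit in calling it a semigroup). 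For the binary lower bound, the paper first proves that the non-type-3 letter $b$ can have no transition into $n-2$ (by counting reachable subsets containing a fixed $p$), so that there is a \emph{single} entry subset, and then analyzes the set $C$ of states lying on cycles of $\delta_b$: preimages preserve cardinality inside $C$, and if $C=\emptyset$ the orbit dies within $n-3$ steps. You skip the first reduction, allow up to two entry subsets, and use the functional-preimage dichotomy: either $\delta_b|_{Q_M}$ is a permutation, so $g$ preserves cardinality and two orbits give at most $2\binom{m}{\lfloor m/2\rfloor}<2^m$ sets, or $\mathrm{im}(g)$ is a subspace of dimension at most $m-1$, giving at most $2+2^{m-1}<2^m$ sets. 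Your version is more uniform (the paper's case $C\neq\emptyset$ is subsumed by your permutation/cardinality case only after its extra reduction, and your linear-algebra bound replaces its orbit-length count), while the paper's is shorter given the citation and exposes more explicit structure of a hypothetical binary witness; both correctly localize the threshold at $m=n-3\ge 3$.
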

\begin{proof}
We use the standard reversal construction: let $\mathcal{N}$ be the NFA obtained from DFA $\mathcal{D}_n$ by reversing all edges, making $n-2$ an initial state, and making $0$ a final state.
Then $\mathcal{N} = (Q,\Sigma,\delta^\mathcal{N},\{n-2\},\{0\})$, and $\delta^\mathcal{N}(q,a) = \delta^{-1}(q,a) = q \delta_a^{-1} $.
We use the powerset construction to determinize $\mathcal{N}$ to $\mathcal{D}_R$.

Remind that $Q_M = \{1,\ldots,n-3\}$.
Let $R(n)$ be the transition semigroup of $\mathcal{D}_R$.
By the reversal construction, $R(n)$ consists of all transformations $t^{-1}$ for $t \in T(n)$.
As it was shown in the proof of~\cite[Theorem~6]{BrJiLiSm14QuotientComplexityOfBifixFactorSubwordFree}, to achieve the upper bound, in particular, each subset of $Q_M$ must be reachable in $\mathcal{D}_R$.

First we show that the transition semigroup $T(n)$ of $\mathcal{D}_n$ is a subsemigroup of $\Wbf(n)$.
Every pair $\{p,q\} \subseteq Q_M$, where $p \neq q$, must be reachable in $\mathcal{D}_R$.
This means that there exists a transformation $t^{-1} \in R(n)$ such that $\{n-2\}t^{-1}=\{p,q\}$.
We have $pt=n-2$ and $qt=n-2$, so $p$ and $q$ are focused by $t \in T(n)$.
Then we know that every pair of states in $Q_M$ is focused, and since $\Wbf(n)$ is the unique maximal transition semigroup with this property (\cite{SzWi16SyntacticComplexityOfBifixFree}), we have $T(n) \subseteq \Wbf(n)$.

Now we show that a binary alphabet, say $\Sigma = \{a,b\}$, is not enough to reach the upper bound.
Let $a \in \Sigma$ be a letter such that $0 \delta_a = q \in Q_M$.
Since $T(n) \subseteq \Wbf(n)$, we have $Q_M \delta_a \subseteq \{n-2,n-1\}$.
Moreover, for all $p \in Q_M$ we have $\{p\}\delta_a^{-1} \subseteq \{0\}$.
Also, the set $\{n-2\}\delta^{-1}_b$ is empty, because if it contains some state $p \in Q_M$, then at most two of the $2^{n-4}$ subsets of $Q_M$ containing $p$ might be reachable, since $p \delta_b = \{n-2\}$ and $p \delta_a \in \{n-2,n-1\}$; then not all subsets of $Q_M$ are reachable since $n \ge 6$.
Because $\{n-2,n-1\}\delta_a = \{n-2,n-1\}\delta_b = \{n-1\}$, any subset containing $n-2$ is reachable in $\mathcal{D}_R$ only by the empty word, and any subset containing $n-1$ is unreachable.
Hence, a non-empty subset $S$ of $Q_M$ must be reachable in $\mathcal{D}_R$ by a word of the form $a b^i$, that is, $S = \{n-2\}\delta_{b^i a}^{-1}$.
Let $C$ be the states from $Q_M$ that are fixed points or belong to a cycle in $\delta_b$.
Observe that $\delta_b^{-1}$ does not change the cardinality in $C$, that is, for any subset $T \subseteq Q$ we have $|T \cap C| = |T \delta_b^{-1} \cap C|$.
Thus, if $C$ is non-empty, only subsets with the same cardinality in $C$ are reachable.
If $C$ is empty, then $S \delta_{b^{n-3}}^{-1} \cap Q_M = \emptyset$, thus at most $n-2$ subsets of $Q_M$ are reachable; since $n \ge 6$, not all subsets of $Q_M$ are reachable.
\end{proof}

It is known that in the case of the class of all regular languages the resulted language of the reversal operation can have any state complexity in range of integers $[\log_2 n,2^n]$ \cite{Jir2008,Seb2013}, thus there are no gaps (\emph{magic numbers}) in the interval of possible state complexities.
The next theorem states that the situation is similar for the case of bifix-free languages.

\begin{theorem}\label{thm:reversal_magic}
If $L_n$ is a bifix-free language with state complexity $n \ge 3$, then the state complexity of $L_n^R$ is in $[3+\log_2(n-2),2+2^{n-3}]$.
Moreover, all values in this range are attainable by $L_n^R$ for some bifix-free language $L_n$, whose a minimal DFA has transition semigroup that is a subsemigroup of $\Wbf(n)$.
\end{theorem}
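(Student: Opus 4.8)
The plan is to fix the two endpoints first and then realize every intermediate value by a single tunable family. For the upper bound there is nothing new: $2+2^{n-3}$ is precisely the tight reversal bound recalled from \cite[Theorem~6]{BrJiLiSm14QuotientComplexityOfBifixFactorSubwordFree}. For the lower bound I would exploit that reversal preserves bifix-freeness: since $L$ is prefix-free iff $L^R$ is suffix-free and vice versa, $L_n^R$ is again bifix-free. Writing $m=\kappa(L_n^R)$, the language $L_n^R$ is then a bifix-free language of state complexity $m$, so applying the \emph{same} reversal upper bound to it gives $\kappa((L_n^R)^R)\le 2^{m-3}+2$. Since $(L_n^R)^R=L_n$ has state complexity $n$, this yields $n\le 2^{m-3}+2$, i.e.\ $m\ge 3+\log_2(n-2)$, which is the claimed left endpoint (and degenerates correctly at $n=3$, where both endpoints equal $3$).

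For attainability I would work with the reverse subset automaton $\mathcal{D}_R$ used above. Reading a word $u$ from its initial state $\{n-2\}$ reaches the column $C_{u^R}$, where $C_w=\{q\in Q\mid q\delta_w=n-2\}$; hence the reachable states of $\mathcal{D}_R$ are exactly the columns $\{C_w\mid w\in\Sigma^*\}$. Because $n-2$ and $n-1$ are absorbed into $n-1$, for $w\ne\varepsilon$ we have $C_w\subseteq Q_M\cup\{0\}$, with $0\in C_w$ iff $w\in L_n$; as $L_n$ is suffix-free, every accepting column has quotient $\{\varepsilon\}$, so all of them collapse to a single final state. Together with the initial state $\{n-2\}$ and the empty (dead) subset this gives $\kappa(L_n^R)=3+|\mathcal{F}|$, where $\mathcal{F}$ is the set of distinct, pairwise distinguishable, nonempty subsets of $Q_M$ that occur as columns. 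The extreme witness realizes $|\mathcal{F}|=2^{n-3}-1$, while the lower bound forces $|\mathcal{F}|\ge\log_2(n-2)$. It therefore suffices to construct, for each integer $t$ with $\lceil\log_2(n-2)\rceil\le t\le 2^{n-3}-1$, a bifix-free DFA with transition semigroup inside $\Wbf(n)$ for which exactly $t$ middle columns occur, whence $\kappa(L_n^R)=3+t$ sweeps the whole range.

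Since a single pullback $S\mapsto S\delta_a^{-1}$ is determined coordinatewise by $\delta_a$, the reachable columns form the orbit of $\{n-2\}$ under the pullback semigroup generated by the letters, so the design reduces to choosing letters that, read in reverse, \emph{enumerate} a prescribed list of subsets of $Q_M$ and then send everything to the dead subset, letting the enumeration be halted after exactly $t$ sets. I plan a coarse/fine scheme: a doubling gadget on a block $B\subseteq Q_M$ (a cyclic letter together with a single-state toggle, both chosen of type~1, 2, or~3 so as to stay in $\Wbf(n)$) makes the subsets of $B$ reachable and accounts for the leading power of two, while a companion letter realizing one step of a binary increment with a built-in cap stops the enumeration at the $t$-th subset, the remaining middle states serving only to carry the increment. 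In parallel one verifies that the forward DFA is a minimal bifix-free automaton on exactly $n$ states (reachability of $Q_M$ along a path and distinguishability from the unique final state $n-2$).

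The main obstacle is the \emph{simultaneous} control of three requirements: keeping the forward automaton minimal on exactly $n$ states with $T(n)\subseteq\Wbf(n)$; ensuring the reverse reaches \emph{no more} than the intended $t$ columns (no stray subsets produced by carry propagation); and ensuring those $t$ middle columns are pairwise \emph{distinguishable} in $\mathcal{D}_R$, so that none are merged under minimization and the count $3+t$ is exact. The tightest instance is the left endpoint $t=\lceil\log_2(n-2)\rceil$, where the reverse is maximally compressed yet the forward DFA must still separate all $n$ states; checking forward minimality with the column count as small as possible is where the construction has the least slack and will demand the most careful bookkeeping.
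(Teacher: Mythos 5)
Your lower-bound argument is correct and is exactly the paper's: reversal preserves bifix-freeness, so applying the known upper bound $2^{m-3}+2$ to $L_n^R$ (of complexity $m$) and using $(L_n^R)^R=L_n$ forces $m\ge 3+\log_2(n-2)$. Your column framework for the reverse subset automaton is also sound; in fact, since all states of the forward DFA are reachable, distinct reachable subsets of the determinized reversal are automatically distinguishable, so $\kappa(L_n^R)$ is just the number of distinct columns, $3+t$ in your notation.

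The genuine gap is in attainability, precisely where you yourself flag trouble. You propose to construct, for \emph{every} $t$ down to $\lceil\log_2(n-2)\rceil$, an $n$-state DFA realizing exactly $t$ middle columns, via a doubling gadget plus a capped binary-increment letter. But the compressed regime $t<n-3$ (reversal complexity below $n$) is exactly the hard case: there the $t$ columns must assign the $n-3$ states of $Q_M$ pairwise distinct non-zero incidence vectors while the forward automaton remains minimal, bifix-free, with transition semigroup inside $\Wbf(n)$, and with no stray columns created by compositions of the inverse letters. You never exhibit this gadget; you only describe its intent and concede it ``will demand the most careful bookkeeping,'' so the values in $[3+\log_2(n-2),\,n-1]$ remain unproven. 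The paper's key idea, which your proposal is missing, is that this regime never needs to be constructed at all: by the same duality you already used for the lower bound, to realize $\alpha\in[3+\log_2(n-2),n-1]$ one takes a bifix-free language $M$ with $\kappa(M)=\alpha$ and $\kappa(M^R)=n$ --- possible because $n\in[\alpha+1,\,2^{\alpha-3}+2]$, the \emph{easy} range for $\alpha$-state languages --- and sets $L_n=M^R$. Hence attainability only has to be proved for $\alpha\in[n,\,2^{n-3}+2]$, where it is immediate with a large alphabet: choose $\alpha-2$ subsets of $Q_M$ including $\emptyset$ and all singletons, and give the DFA one letter $b_S\colon (S\to n-2)(Q\setminus S\to n-1)$ per chosen subset together with letters $a_q\colon (0\to q)(\{1,\ldots,n-2\}\to n-1)$; reachability of exactly the chosen columns, forward minimality, and membership in $\Wbf(n)$ are then one-line checks, with no counter machinery. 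Either make your increment gadget explicit and verify all three constraints for it, or adopt this reduction; as written, the attainability half of the theorem is not established.
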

\begin{proof}
Note that if $L_n$ is a bifix-free language, then so $L_n^R$ is.
Also we know that $(L_n^R)^R = L_n$.

Suppose for a contradiction that there is some $L_n$ whose $L_n^R$ has state complexity $\alpha < 3+\log_2(n-2)$.
We have $L_n^R$ with state complexity $\alpha$ whose reverse $(L_n^R)^R = L_n$ has state complexity $n$.
However, since $n > 2^{\alpha-3}+2$, this means that $(L_n^R)^R$ exceeds the upper bound for reversal.

Now it is enough to show that every value from $[n,2+2^{n-3}]$ is attainable, because to reach $\alpha \in [3+\log_2(n-2),n-1]$ for some $L_n^R$, we can use $L_\alpha = L_n^R$ whose reverse $(L_n^R)^R = L_n$ has state complexity $n \in [\alpha+1,2+2^{\alpha-3}]$.

Let $\alpha \in [n,2+2^{n-3}]$.
We construct the DFA $\mathcal{D}_n(Q,\Sigma,\delta,0,\{n-2\})$ as follows:
For each $q \in \{1,\ldots,n-3\}$, we put the letter $a_q\colon (0 \to q)(\{1,\ldots,n-2 \to n-1)$.
We choose $\alpha-2$ subsets of $\{1,\ldots,n-3\}$ in such a way that we always have $\emptyset$ and all singletons $\{q\}$, while the other $\alpha-3-(n-3)$ subsets are chosen arbitrarily.
Since $\alpha \le 2+2^{n-3}$, $n \ge 3$, and the number of the subsets is $2^{n-3}$, we can always make that choice.
For each of the chosen subset, we put the letter $b_S\colon (S \to n-2)(Q \setminus S \to n-1)$.

Observe that $\mathcal{D}_n$ is minimal: every state $q \in \{1,\ldots,n-3\}$ is reachable by $a_q$, and $n-2$ and $n-1$ are reachable using some $b_S$.
State $0$ is distinguished since $a_q$ maps all the other states to empty state $n-1$. Two distinct states $p,q \in \{1,\ldots,n-3\}$ are distinguished by $b_{\{p\}}$, since its action maps $p$ to $n-2$ and $q$ to $n-1$.
Also, the transformations of $a_q$ and $b_S$ are of type~3 and type~1 from Definition~\ref{def:Wbf}, and so the transition semigroup of $\mathcal{D}_n$ is a subsemigroup of $\Wbf(n)$.

Let $\mathcal{D}_R(2^Q,\Sigma,\delta^{-1},\{n-2\},\{0\} \cup 2^{Q \setminus \{0\}})$ be the automaton recognizing $L_n^R$ obtained by reversing the edges of $\mathcal{D}_n$ and determinization.
We show that there are exactly $\alpha$ reachable subsets in $\mathcal{D}_R$.
The transformation $\delta^{-1}_{b_S}$ allows to reach the subset $S$ from $\{n-2\}$.
Then for any non-empty $S$, $\delta^{-1}_{a_q}$ for some $q \in S$ allows to reach $\{0\}$ from $S$.
Thus we have reachable $\{0\}$, $\{n-2\}$, and $\alpha-2$ chosen subsets $S$.
No subset containing $n-1$ can be reached, and $\{n-2\}$ is the only reachable subset containing $n-2$.
Since $\{0\}$ is reachable only by the transformations $\delta^{-1}_{a_q}$, and these transformations map $Q \setminus \{q\}$ to $\emptyset$, no other subset containing $0$ can be reached.
Since the transformations $\delta^{-1}_{a_q}$ of the letters $a_q$ do not map any state to a state in $\{1,\ldots,n-3\}$, to reach $S \subseteq \{1,\ldots,n-3\}$ we must use the transformations of the letters of type $b_S$. But this is possible only from $\{n-2\}$ for the chosen $\alpha-2$ subsets.

Finally, observe that every two distinct subsets $S,S' \subseteq \{1,\ldots,n-3\}$ are distinguished by $a_q^{-1}$ such that $q \in S \oplus S'$.
Subset $\{0\}$ is the final subset, and $\{n-2\}$ is distinguished as it is the only subset for which $b_{\{q\}}$ does not result in $\emptyset$.
\end{proof}

\subsection{Atom complexities}\label{subsec:atoms}

Here we prove a tight upper bound on the number and the state complexities of atoms of a bifix-free language.

Remind that for $S \subseteq Q$ an atom $A_S = \bigcap_{q \in S} K_q \cap \bigcap_{q \in Q \setminus S} \overline{K_q}$ is a non-empty set.
Then for any $w\in\Sigma^*$ we have
$$A_S.w =  \{u \mid wu \in A_S\} = \bigcap_{q \in S}K_q.w \cap \bigcap_{q \in \overline{S}} \overline{K_q.w}.$$
A quotient of a quotient of $L$ is also a quotient of $L$, and therefore $A_S.w$ has the following form:
$$A_S.w = \bigcap_{q \in X} K_q \cap \bigcap_{q \in Y} \overline{K_q},$$
where $|X| \le |S|$, $|Y| \le n-|S|$, and $X,Y \subseteq Q$ are disjoint.

Using the approach from Iv\'an~\cite{Iva16ComplexityOfAtoms} we define the DFA $\mathcal{D}_S=(Q_S,\Sigma,\delta,(S,Q \setminus S),F_S)$ such that:
\begin{itemize}
\item $Q_S = \{(X,Y) \mid X,Y \subseteq Q, X \cap Y = \emptyset\} \cup \{\bot\}$.
\item For all $a \in \Sigma$, $(X,Y)a = (Xa,Ya)$ if $X t_a \cap Y t_a = \emptyset$, and $(X,Y)a = \bot$ otherwise; also $\bot t_a = \bot$.
\item $F_S = \{(X,Y) \mid X\subseteq \{n-2\}, Y \subseteq Q \setminus \{n-2\}\}$. 
\end{itemize}

Then DFA $\mathcal{D}_S$ recognizes $A_S$, and so if $\mathcal{D}_S$ recognizes a non-empty language, then $A_S$ is an atom. 
Every quotient of an atom is represented by a pair $(X,Y)$.

\begin{theorem}\label{thm:atoms_upper_bound}
Suppose that $L_n$ is a bifix-free language recognized by a minimal DFA $\mathcal{D}_n(Q,\Sigma,\delta,0,\{n-2\})$.
Then there are at most $2^{n-3}+2$ atoms of $L_n$ and the quotient complexity of $\kappa(A_S)$ of atom $A_S$ satisfies:
\[
\kappa(A_S) \begin{cases}
\le 2^{n-2}+1 & \text{if $S = \emptyset$;}\\
= n & \text{if $S = \{0\}$;}\\
= 2 & \text{if $S = \{n-2\}$;} \\
\le 3 + \sum_{x=1}^{|S|}\sum_{y=0}^{n-3-|S|}\binom{n-3}{x}\binom{n-3-x}{y} &\text{if $\emptyset \neq S\subseteq \{1,\ldots,n-3\}$.}
\end{cases}
\]
\end{theorem}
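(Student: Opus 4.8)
The plan is to work with the preimage description of atoms together with the auxiliary DFAs $\mathcal{D}_S$. First I would observe that a word $w$ lies in $A_S$ exactly when $S=\{n-2\}\delta_w^{-1}$, the set of states sent to the final state by $w$. From Lemma~\ref{lem:bifix-free_properties} I would record three facts: for every $w\in\Sigma^+$ we have $(n-1)\delta_w=n-1$ and $(n-2)\delta_w=n-1$, so $n-1$ is never in such a preimage and $n-2$ only for $w=\varepsilon$; and, since the DFA is non-returning, state $0$ has no in-transitions, so $0\notin Q\delta_w$ for $w\in\Sigma^+$. Hence $0\in\{n-2\}\delta_w^{-1}$ forces $w\in L$, and then condition~3 makes $0$ the unique such state, i.e.\ $S=\{0\}$. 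Consequently the only possible $S$ are $\emptyset$, $\{0\}$, $\{n-2\}$, and the nonempty subsets of $Q_M$, which already gives at most $3+(2^{n-3}-1)=2^{n-3}+2$ atoms. The two singleton cases are immediate: $A_{\{0\}}=L$ has complexity $n$, and $A_{\{n-2\}}=\{\varepsilon\}$ has complexity $2$.

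For $S=\emptyset$ the state of $\mathcal{D}_S$ reached by $w$ is $(\emptyset,Q\delta_w)$, which is never $\bot$, so its quotients correspond to the distinct images $Q\delta_w$. For $w=\varepsilon$ this is $Q$, and for $w\in\Sigma^+$ we have $0\notin Q\delta_w$ while $n-1\in Q\delta_w$; thus $Q\delta_w$ ranges over subsets of $\{1,\dots,n-1\}$ containing $n-1$, of which there are $2^{n-2}$. Adding $Q$ itself gives the bound $2^{n-2}+1$.

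The main case is $\emptyset\neq S\subseteq Q_M$, where I would count the reachable quotients of the form $(X,Y)=(S\delta_w,(Q\setminus S)\delta_w)$ in $\mathcal{D}_S$. For a non-$\bot$ state one has $0\notin X\cup Y$ and $n-1\in Y$, so $X\subseteq Q_M\cup\{n-2\}$. If $n-2\in X$ then $A_S.w\subseteq K_{n-2}=\{\varepsilon\}$, so the quotient is $\{\varepsilon\}$ or $\emptyset$. Otherwise $X\subseteq Q_M$ is nonempty with $|X|\le|S|$; since $\overline{K_{n-1}}=\Sigma^*$ is redundant and $\varepsilon\notin\bigcap_{p\in X}K_p$ (because $X\not\subseteq\{n-2\}$), the membership of $n-2$ in $Y$ is irrelevant, and the quotient language is determined by the pair $(X,\,Y\cap Q_M)$ as $\bigcap_{p\in X}K_p\cap\bigcap_{p\in Y\cap Q_M}\overline{K_p}$.

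The hard part will be bounding $|Y\cap Q_M|$: only $0$ and the $n-3-|S|$ states of $Q_M\setminus S$ can map into $Q_M$, and when the private image $p_0:=0\delta_w$ lands in $Q_M$ it is, by condition~3, distinct from all images of $Q_M\setminus S$, seemingly giving one extra state and a count of $n-2-|S|$. The key lemma removes it: if $p_0\in Q_M$ and $X\neq\emptyset$, then $\bigcap_{p\in X}K_p\cap K_{p_0}=\emptyset$. Indeed, a common word $u$ would give $wu\in L$ (from $p_0\delta_u=0\delta_{wu}=n-2$) together with $q\delta_{wu}=n-2$ for some $q\in S\subseteq Q\setminus\{0\}$, contradicting condition~3. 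Hence $\overline{K_{p_0}}$ is redundant, $p_0$ may be dropped, and the remaining set $Y'\subseteq(Q_M\setminus S)\delta_w$ satisfies $|Y'|\le n-3-|S|$. Every such quotient then has the form $\bigcap_{p\in X}K_p\cap\bigcap_{p\in Y'}\overline{K_p}$ with $X\subseteq Q_M$, $1\le|X|\le|S|$, $Y'\subseteq Q_M\setminus X$, and $|Y'|\le n-3-|S|$, so their number is at most $\sum_{x=1}^{|S|}\sum_{y=0}^{n-3-|S|}\binom{n-3}{x}\binom{n-3-x}{y}$. Adding the three special quotients $A_S$, $\{\varepsilon\}$, and $\emptyset$ yields the claimed bound. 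The crux, and the only place suffix-freeness enters essentially, is this redundancy lemma for the private image $0\delta_w$.
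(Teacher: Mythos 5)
Your proof is correct and arrives at exactly the paper's bounds, and its skeleton matches the paper's: ruling out $n-1\in S$, forcing $S=\{0\}$ or $S=\{n-2\}$ whenever $0$ or $n-2$ lies in $S$, the $2^{n-2}+1$ count for $A_\emptyset$, and counting representations $\bigcap_{p\in X}K_p\cap\bigcap_{p\in Y'}\overline{K_p}$ together with the three special quotients $A_S$, $\{\varepsilon\}$, $\emptyset$. The genuine difference is the key step in the case $\emptyset\neq S\subseteq\{1,\dots,n-3\}$, namely how to prevent the private image $p_0=0\delta_w$ from inflating $|Y\cap Q_M|$ to $n-2-|S|$. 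The paper handles this by iterating $w$: condition~3 of Lemma~\ref{lem:bifix-free_properties} forces the orbit $0,0\delta_w,0\delta_w^2,\dots$ to end in $n-1$, hence some $p\in Q_M$ satisfies $p\delta_w\in\{n-2,n-1\}$; if $p\in S$ the quotient degenerates to $\{\varepsilon\}$ or $\emptyset$, while if $p\in Q_M\setminus S$ its lost image compensates for $p_0$, so the actual set $Y\setminus\{n-2,n-1\}$ still has at most $n-3-|S|$ states. You instead prove a redundancy lemma, $\bigcap_{p\in X}K_p\cap K_{p_0}=\emptyset$, by a single application of condition~3 to the word $wu$, so that the term $\overline{K_{p_0}}$ can simply be deleted from the representation. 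Your argument is shorter and more local (no path/cycle iteration and no case split on $p\in S$ versus $p\notin S$), and it isolates cleanly the one point where suffix-freeness is indispensable; the paper's argument yields the slightly stronger structural fact that the image sets themselves remain small unless the quotient is trivial, which is the form of the statement that the matching lower-bound construction of Theorem~\ref{thm:atoms_lower_bound} actually realizes. Both routes are sound and produce identical counts.
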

\begin{proof}
We follow similarly to the proof from~\cite{BrSz15ComplexityOfSuffixFree} for the class of suffix-free languages.

If $n-1 \in S$ then $A_S$ would be empty, because the quotient of $n-1$ is the empty language, and so will not form an atom.

Suppose that $0 \in S$. Since $L_n$ is suffix-free, we have $K_0 \cap K_q = \emptyset$ for $q \neq 0$, so if $\{0,q\} \subseteq S$ then $A_S$ would be empty. Thus $S = \{0\}$, $A_0 = K_0$, and the quotient of $0$ has complexity $n$, since $\mathcal{D}_n$ is minimal.

Suppose that $n-2 \in S$. Since $K_{n-2} = \{\varepsilon\}$ and this is the quotient containing the empty word, we have $K_{n-2} \cap K_q = \emptyset$ for $q \neq n-2$, so if $\{n-2,q\} \subseteq S$ then $A_S$ would be empty.
Thus $S = \{n-2\}$, and the quotient of $n-2$ has complexity $2$.

It follows that there are at most $2^{n-3}+2$ atoms: $A_{\{0\}}$, $A_{\{n-2\}}$, and $A_S$ for any $S \subseteq \{1,\ldots,n-3\}$.

Suppose that $S = \emptyset$.
Then $A_\emptyset = \bigcap_{q \in Q} \overline{K_q}$.
For any $w \in \Sigma^+$ we have $A_\emptyset.w = \bigcap_{q \in Y} \overline{K_q}$, for some $Y \subseteq Q \setminus \{0\}$.
We can assume that $n-1 \in Y$ since $\overline{K_{n-1}} = \Sigma^*$.
Thus there are at most $2^{n-2}$ choices of $Y$, which together with the initial quotient $A_\emptyset$ yields the quotient complexity $2^{n-2}+1$.

Suppose that $\emptyset \neq S \subseteq \{1,\ldots,n-3\}$.
Consider the non-empty quotient $A_S.w$ for some $w \in \Sigma^+$ be represented as $\bigcap_{q \in X} K_q \cap \bigcap_{q \in Y} \overline{K_q}$.
So $X$ has at least one and at most $|S|$ states from $Q \setminus \{0\}$,
$Y \subseteq Q \setminus \{0\}$ and always contains $n-1$.
If $n-2 \in X$ then if this quotient is non-empty then it must be $\{\varepsilon\}$, and if $n-2 \in Y$ then the quotient may be represented by $(X,Y \setminus \{n-2\})$.

If $0 \delta_w \in \{n-2,n-1\}$ then $Y \setminus \{n-2,n-1\}$ contains from $0$ to at most $n-3-|S|$ states from $Q \setminus (\{0,n-2,n-1\} \cup X)$.
Suppose that $0 \delta_w = q \in Q_M$. Since the language is suffix-free, the path in $\delta_w$ starting at $0$ must end in $n-1$, as otherwise $0 \delta^i_w = q \delta^i_w \neq n-1$ for some $i$, which contradicts Lemma~\ref{lem:bifix-free_properties}(Condition~3).
But then there exists a state $p \in Q_M$ such that $p \delta_w \in \{n-2,n-1\}$.
If $p \in S$, then $n-1 \in X$, and so $(X,Y)$ represents the empty quotient.
If $p \in Q \setminus S$, then again $Y \setminus \{n-2,n-1\}$ contains at most $n-3-|S|$ states.

So for every choice of $X$ we have $0 \le |Y \setminus \{n-2,n-1\}| \le n-3-|S|$ from $Q \setminus (\{0,n-2,n-1\} \cup X)$, which together with the initial quotient, $\{\varepsilon\}$ quotient, and the empty quotient yields the formula in the theorem.
\end{proof}

\begin{theorem}\label{thm:atoms_lower_bound}
For $n \ge 6$, let $L_n$ be the language recognized by the DFA $\mathcal{D}(Q,\Sigma,\delta,0,\{n-2\})$, where $\Sigma = \{a,b,c,d,e_1\ldots,e_{n-3}\}$, and $\delta$ is defined as follows:\\
$\delta_a\colon (0 \to 1)((Q \setminus \{0\}) \to n-1)$,\\
$\delta_b\colon (\{0,n-2\} \to n-1)(1,2)$,\\
$\delta_c\colon (\{0,n-2\} \to n-1)(1,\ldots,n-3)$,\\
$\delta_d\colon (\{0,n-2\} \to n-1)(2 \to 1)$,\\
$\delta_{e_q}\colon (\{0,n-2\} \to n-1)(q \to n-2)$ for $q \in Q_M$.\\
Then $\mathcal{D}$ is minimal, $L_n$ is bifix-free and it meets the upper bounds for the number and complexities of atoms from Theorem~\ref{thm:atoms_upper_bound}.
\end{theorem}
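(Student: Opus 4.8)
The plan is to verify in turn that (i)~every generator of $\mathcal{D}$ is a transformation of type~1 or type~3 from Definition~\ref{def:Wbf}, so that $T(n)\subseteq\Wbf(n)$ and $L_n$ is bifix-free; (ii)~$\mathcal{D}$ is minimal; and (iii)~the atoms realize the bounds of Theorem~\ref{thm:atoms_upper_bound}. For~(i) one reads off the types directly: $\delta_a$ sends $0$ into $Q_M$ and $Q_M\cup\{n-2\}$ into $\{n-1\}$, so it is type~3, while $\delta_b,\delta_c,\delta_d,\delta_{e_q}$ all collapse $\{0,n-2,n-1\}$ to $n-1$ and keep $Q_M$ inside $Q_M\cup\{n-2\}$, so they are type~1; with the final/empty-state conventions this yields conditions (1)--(3) of Lemma~\ref{lem:bifix-free_properties}. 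For~(ii), state $1$ is reached by $a$, every $q\in Q_M$ from $1$ by a power of $c$, then $n-2$ by $e_1$ and $n-1$ by a further letter; distinguishability uses $e_p$ to separate $p,q\in Q_M$ (it sends $p$ to the final state and fixes $q$) and $a e_1$ to isolate $0$.

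The heart of the proof is the atom analysis, carried out in Iv\'an's automaton $\mathcal{D}_S$ from Section~\ref{subsec:atoms}. The key structural fact is that, read on $Q_M$, the letters $b,c,d$ generate the full transformation monoid $\mathcal{T}_{n-3}$ (indeed $\langle b,c\rangle=S_{n-3}$ and $d$ is a rank-$(n-4)$ map), while simultaneously sending $0,n-2,n-1$ to $n-1$; and $e_q$ is the only way to move a single state of $Q_M$ to the final state $n-2$. I would first dispose of the easy atoms: $A_{\{0\}}=K_0=L_n$ has complexity $n$ and $A_{\{n-2\}}=K_{n-2}=\{\varepsilon\}$ has complexity $2$ by minimality, and for $A_\emptyset$ one tracks only the $Y$-component (no collisions are possible since $X=\emptyset$), reaching every $Y\subseteq\{1,\dots,n-2\}$ together with the forced $n-1$ --- the coordinate $n-2$ is made free by a final application of some $e_q$ --- plus the initial pair $(\emptyset,Q)$, which is genuinely distinct because it carries the \emph{non-redundant} constraint $\overline{K_0}$; this gives exactly $2^{n-2}+1$.

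For $\emptyset\neq S\subseteq Q_M$ the plan is to realize, as reachable states of $\mathcal{D}_S$, every pair $(X,Z\cup\{n-1\})$ with $X,Z\subseteq Q_M$ disjoint, $1\le|X|\le|S|$ and $0\le|Z|\le n-3-|S|$. Given such a target I choose a transformation $f$ of $Q_M$ with $Sf=X$ and $(Q_M\setminus S)f=Z$ (possible exactly under these cardinality constraints) and realize it by a word in $b,c,d$; to avoid the sink $\bot$ at intermediate steps I perform $f$ as a sequence of merges carried out \emph{separately} inside the $S$-block and inside its complement, followed by a single permutation placing the two blocks onto $X$ and $Z$, so that the images of $S$ and of $Q_M\setminus S$ stay disjoint throughout. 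The case $|Z|=0$ is obtained from a pair with $|Z|=1$ by one application of $e_q$, which peels the last middle state of $Y$ off to $n-2$ without disturbing $X$. Non-emptiness of each $A_S$ (hence the count $2^{n-3}+2$ of atoms) follows by collapsing $S$ to a single state and then applying the corresponding $e$, reaching the accepting $\{\varepsilon\}$-quotient.

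Distinguishability is where the argument is most delicate, and I expect it to be the main obstacle. The only accepting reachable quotient is the $\{\varepsilon\}$-quotient, and because no letter maps two states to $n-2$ at once, a word makes $(X,Z)$ accepting only by first collapsing $X$ to a single state $r$ (keeping $Z$ off $r$) and then applying $e_r$. To separate two distinct pairs I exploit this: if they differ in $X$, I collapse the appropriate $X$ to a single point while forcing the other to have image of size $\ge 2$ (or to land off the chosen $e$-index), so that exactly one becomes accepting; if they share $X$ but differ in $Z$, say $q\in Z_1\setminus Z_2$, I route $q$ into the very state $r$ to which $X$ is collapsed, creating a collision $\to\bot$ for the first pair while the second collapses cleanly and is accepted. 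The remaining care is the bookkeeping of the three non-generic quotients --- the empty quotient, the $\{\varepsilon\}$-quotient, and the initial quotient. Here one must note that for $S\subseteq Q_M$ the constraints $\overline{K_0}$ and $\overline{K_{n-2}}$ in the initial quotient are redundant (by suffix-freeness $\bigcap_{q\in S}K_q\cap K_0=\emptyset$), so the initial quotient has to be identified correctly against the ``saturated'' pair $(S,(Q_M\setminus S)\cup\{n-1\})$ before the reachable and pairwise distinguishable states are counted and matched against the formula of Theorem~\ref{thm:atoms_upper_bound}.
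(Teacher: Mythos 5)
Your route is essentially the paper's own: show that $b,c,d$ restricted to $Q_M$ generate the full transformation monoid (Piccard), so that together with the letters $e_q$ the transition semigroup contains every type-1 transformation of Definition~\ref{def:Wbf}, and then argue reachability and distinguishability inside Iv\'an's automaton $\mathcal{D}_S$. Two mechanical remarks: your block-wise merging to keep intermediate images disjoint is sound but unnecessary, since collisions are monotone (if $X\delta_u\cap Y\delta_u\neq\emptyset$ for a prefix $u$ of $w$, then $X\delta_w\cap Y\delta_w\neq\emptyset$), so disjointness of the final images already guarantees the path avoids $\bot$; and your collapse-then-accept distinguishing scheme (collapse one $X$ to a point $r$ and apply $e_r$; route a differing $Y$-state into $r$ to force $\bot$) is correct and is a concrete rendering of what the paper does with arbitrary type-1 transformations.

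However, two of your claims fail, and both expose defects of the statement itself rather than slips in your write-up. (i) For $A_\emptyset$ you cannot reach $(\emptyset,Y\cup\{n-1\})$ for \emph{every} $Y\subseteq\{1,\ldots,n-2\}$: the full set is unreachable, because every letter of $\mathcal{D}$ merges at least two states into $n-1$ ($a$ collapses $Q\setminus\{0\}$, every other letter collapses $\{0,n-2,n-1\}$), so $|Q\delta_w|\le n-2$ for every non-empty $w$, while the state $(\emptyset,Q\setminus\{0\})$ needs an image of size $n-1$; a final application of $e_q$ cannot enlarge an image. Hence this witness gives $\kappa(A_\emptyset)=2^{n-2}$, one below the claimed bound (the paper's proof makes the same false reachability claim; repairing it requires an extra letter acting injectively on $\{0\}\cup Q_M$, e.g.\ a type-2 letter permuting $Q_M$). (ii) Your ``identification'' of the initial state with the saturated pair $(S,(Q_M\setminus S)\cup\{n-1\})$ is correct --- indeed $\bigcap_{q\in S}K_q\subseteq\overline{K_0}\cap\overline{K_{n-2}}$, since $\varepsilon\notin K_q$ for $q\in Q_M$ and $K_0\cap K_q=\emptyset$ by suffix-freeness --- but it cannot be ``matched against the formula'': it holds for \emph{every} bifix-free language, so at most $2+\sum_{x=1}^{|S|}\sum_{y=0}^{n-3-|S|}\binom{n-3}{x}\binom{n-3-x}{y}$ distinct quotients exist, and the bound with constant $3$ in Theorem~\ref{thm:atoms_upper_bound} is unattainable by anything. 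This is exactly where the paper's own proof breaks: its distinguishing transformation $(0\to n-1)(q\to x)(n-2\to n-1)$ is undefined for this pair of states, because the only candidates $q\in Y_1\setminus Y_2$ are $0$ and $n-2$, which the transformation must simultaneously send to $n-1$. So your plan, carried out honestly, ends with a count one short of the printed formula; the correct conclusion of your own observation is that the additive constant in Theorem~\ref{thm:atoms_upper_bound} must be $2$, not $3$ (mirroring the suffix-free case, where the initial quotient of an atom is absorbed the same way), while the number of atoms, $2^{n-3}+2$, is unaffected. With that correction and with the extra letter fixing $A_\emptyset$, your argument does go through.
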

\begin{proof}
It is easy to observe that $\mathcal{D}$ is minimal, it recognizes a bifix-free language $L_n$, and its transition semigroup is a subsemigroup of $\Wbf(n)$.
We show that the atom complexities $\kappa(A_S)$ meet the bounds, which also implies that there are $2^{n-3}+2$ atoms.

Observe that in the transition semigroup of $\mathcal{D}$ we have all transformations of type~1 from Definition~\ref{def:Wbf}: the transformations of $b$, $c$, and $d$ generate all transformations on $Q_M$ (\cite{Pic38}) with $(\{0,n-2\} \to n-1)$, using $e^2_q$ we can map any state $q$ to $n-1$, and finally by using $e_q$ we can map any state $q$ to $n-2$.

First, by the properties of prefix-free and suffix-free languages (see Lemma~\ref{lem:bifix-free_properties}), we have the two special atoms $A_{\{n-2\}} = \{\varepsilon\}$ and $A_{\{0\}} = L_n$, which obviously meet the bounds $2$ and $n$.
Now consider $S \subseteq \{1,\ldots,n-3\}$ and the automaton $\mathcal{D}_S$ recognizing $A_S$.

For $S = \emptyset$ we know that the initial state of $\mathcal{D}_S$ is $(\emptyset,Q)$.
Since we have all transformations of type~1, we can reach $(\emptyset,Y)$ for all $Y \subseteq Q \setminus \{0\}$ which contains $n-1$.
Therefore, $2^{n-2}$ states and the initial state are reachable.
All states $(\emptyset,Y_1)$ and $(\emptyset,Y_2)$ for $Y_1,Y_2 \subseteq Q \setminus \{0\}$ which contain $n-1$ are distinguishable as follows.
If $q \in Y_1$ and $q \notin Y_2$, then we can use $(q \to n-2)((Q \setminus \{q\}) \to n-1)$ of type~1, which accepts from $(\emptyset,Y_2)$ but does not accept from $(\emptyset,Y_1)$.
Also $(\emptyset,Q)$ is distinguished from the other states by $\delta_a\delta_{e_1}$.

Now consider $\emptyset \neq S \subseteq \{1,\ldots,n-3\}$.
We show that all states $(X,Y)$ such that $X \subseteq Q_M$, $n-1 \in Y \subseteq (Q_M \setminus X) \cup \{n-1\}$, $|X| \le |S|$, $|Y| \le n-2-|S|$, together with the initial state $(S,Q \setminus S)$, the empty state $\bot$, and a final state $(\{n-2\},\{n-1\})$, are reachable and pairwise distinguishable.
This yields the number of states in the formula.

We can map $S$ to $X$ and $Q \setminus S$ to $Y$ by a transformation of type~1 from $\Wbf(n)$. Therefore, all states $(X,Y)$ are reachable.
Also by a transformation of type~1 we can map $S$ to $n-2$, and $Q \setminus S$ to $n-1$, so $(\{n-2\},\{n-1\})$ is reachable.
Then the empty state is reachable from $(\{n-2\},\{n-1\})$ by any transformation.

State $(\{n-2\},\{n-1\})$ is the only final state which we consider.
A state $(X,Y)$ is non-empty since we can map $X$ to $n-2$ and $Y$ to $n-1$ by a transformation of type~1.
Consider two states $(X_1,Y_1) \neq (X_2,Y_2)$.
Without loss of generality, if $q \in X_1$ and $q \notin X_2$, then the transformation $(X_1 \to n-2)((Q \setminus X_1) \to n-1)$ of type~1 distinguishes both pairs.
Otherwise, if $q \in Y_1$ and $q \notin Y_2$, then the transformation $(0 \to n-1)(q \to x)(n-2 \to n-1)$ of type~1, where $x$ is some state from $X_1$, yields the empty state from $(X_1,Y_1)$, but the non-empty state $(X_2,Y_2 \setminus \{0,n-2\})$ from $(X_2,Y_2)$.
\end{proof}

\begin{theorem}
For $n \ge 7$, to meet the upper bounds for the atom complexities from Theorem~\ref{thm:atoms_upper_bound} by the language of a minimal DFA $\mathcal{D}_n(Q,\Sigma,\delta,0,\{n-2\})$, the size of the alphabet $\Sigma$ must be at least $n+1$.
Moreover, the transition semigroup of $\mathcal{D}_n$ must be a subsemigroup of $\Wbf(n)$.
\end{theorem}
\begin{proof}
First we show that the transition semigroup of $\mathcal{D}_n$ is a subsemigroup of $\Wbf(n)$.
Consider the atom $A_S$ for $S = Q_M$, and the DFA $\mathcal{D}_S$.
In particular the state $(\{1\},\{n-1\})$ must be reachable.
So there is a transformation mapping $Q_M$ to $1$, which means that all colliding pairs are focused.
Since $\Wbf(n)$ is the unique maximal transition semigroup with this property (\cite{SzWi16SyntacticComplexityOfBifixFree}), the transition semigroup of $\mathcal{D}_n$ must be a subsemigroup of it.

Now we show that to meet the bounds we require at least $n$ letters.
Since $\mathcal{D}$ is minimal, there must be a letter of type~3, which maps $0$ to a state from $Q_M$; let it be named $a$.

Consider the atom $A_\emptyset$.
Then all states $(\emptyset,Y)$ for $n-1 \in Y \subseteq Q \setminus \{0\}$ must be reachable.
In particular, for every $q$ the state $(\emptyset,(Q_M \setminus \{q\}) \cup \{n-2,n-1\})$ is reachable.
Let it be reachable by some word $w$, and consider the transformation $t$ of the last letter in this word.
Since the transition semigroup is a subsemigroup of $\Wbf(n)$, $t$ must be of type~1, so it maps $0$ to $n-1$, $q$ to $n-2$, and permutes the other states from $Q_M$.
Therefore, we must have $n-3$ different such transformations for every $q \in Q_M$; let their letters be named $e_q$.

Consider the atom $A_S$ for $S = Q_M$.
Since a state $(S \setminus \{q\},\{n-1\})$ for any $q \in Q_M$ must be reachable, in the transition semigroup there must be a transformation that focuses a pair of states from $Q_M$ and does not map any state from $Q_M$ to $n-2$ nor to $n-1$.
If this transformation is induced by some word, then there must be a letter in this word with the same property. So it must be different from $a$ and from $\delta_{e_q}$; let it be named $d$.

Finally, we show that there must be at least two other letters inducing transformations of type~1 acts on $Q_M$ as permutations.
Note that these transformations cannot be generated from $\delta_a$, $\delta_d$, and the transformations $\delta_{e_q}$.

Consider the atom $A_S$ for $S = \{1,2\}$.
Then, in particular, all $\binom{n-3}{2}$ states $(X,Y)$ with $|X| = 2$ and $Y =  (Q_M \setminus X) \cup \{n-1\}$ must be reachable in $\mathcal{D}_S$, which means that there is a letter inducing permutation on $Q_M$.
Let it be named $c$.
If there are no more letters than $a$, $c$, $d$, and $e_{q}$, then all states $(X,Y)$ with $|X| = 2$ and $Y =  (Q_M \setminus X) \cup \{n-1\}$ must be reachable from the initial $(\{1,2\},Q \setminus \{1,2\})$ just by $c^i$, since the transformations of the other letters are not permutations on $Q_M$.

We can decompose $\delta_c$ restricted to $Q_M$ into disjoint cycles.
Suppose that there are two or more cycles, on $C_1,C_2 \subseteq Q_M$, respectively.
Then, however, only states $(X,Y)$ with the same number $|X \cap C_1|=|\{1,2\} \cap C_1|$ can be reached by $c^i$, as $\delta_c$ preserves this number.
Therefore, $\delta_c$ must be a cycle on $Q_M$.
Then, however, at most $n-3$ states $(X,Y)$ can be reached, which is less than $\binom{n-3}{2}$ for $n \ge 7$.
Thus, there must be another letter (inducing a permutation on $Q_M$), say $b$.

Summarizing, we have the letters $a$, $b$, $c$, $d$, and $n-3$ letters $e_q$ inducing different transformations.
\end{proof}


\section{Most complex bifix-free languages}

\subsection{A ternary most complex stream for basic operations}

First we show a most complex stream of bifix-free languages for basic operations which uses only a ternary alphabet.
This alphabet is a smallest possible, because for union, symmetric difference, and reversal we require at least three letters to meet the bounds.

\begin{definition}[Most complex stream for operations]\label{def:most_complex_operations}
For $n \geq 7$, we define the DFA $\mathcal{D}_n = ( Q,\Sigma,\delta,0,\{n-2\})$, where $ Q = \{0,\dots,n-1\}$, $\Sigma = \{a,b,c\}$, $h=\lfloor (n-1)/2 \rfloor$ and $\delta$ is defined as follows:
\begin{itemize}
\item $\delta_a\colon (0 \rightarrow 1)(\{1,\dots,n-3\} \rightarrow n-2)(\{n-2,n-1\} \rightarrow n-1),$
\item $\delta_b\colon (\{0,n-2,n-1\} \rightarrow n-1)(1,\dots,n-3),$
\item $\delta_c\colon (\{0,n-2,n-1\} \rightarrow n-1)(1 \rightarrow h)(h \rightarrow n-2)(n-3,\dots,h+1,h-1,\dots,2).$
\end{itemize}
The DFA $\mathcal{D}_n$ is illustrated in Fig.~\ref{fig:most_complex_operations}.
\end{definition}

\begin{figure}[ht]
\unitlength 14pt\small
\gasset{Nh=2,Nw=2,Nmr=1,ELdist=0.3,loopdiam=1.5}
\begin{center}\begin{picture}(27,22)(-1,-1)
\node(1)(6,10){$1$}
\node(2)(8.5,17.5){$2$}
\node[Nframe=n](dots1)(16,20){$\dots$}
\node(h-1)(23.5,17.5){$h-1$}
\node(h)(26,10){$h$}
\node(h+1)(23.5,2.5){$h+1$}
\node[Nframe=n](dots2)(16,0){$\dots$}
\node(n-3)(8.5,2.5){$n-3$}
\node(0)(0,10){$0$}\imark(0)
\node(n-2)(16,10){$n-2$}\rmark(n-2)
\drawedge(0,1){$a$}
\drawedge[curvedepth=0,ELpos=40](1,n-2){$a$}
\drawedge[curvedepth=0,ELpos=40](2,n-2){$a$}
\drawedge[curvedepth=0,ELpos=40](dots1,n-2){$a$}
\drawedge[curvedepth=0,ELpos=40](h-1,n-2){$a$}
\drawedge[curvedepth=0,ELpos=40](h,n-2){$a,c$}
\drawedge[curvedepth=0,ELpos=40](h+1,n-2){$a$}
\drawedge[curvedepth=0,ELpos=40](dots2,n-2){$a$}
\drawedge[curvedepth=0,ELpos=40](n-3,n-2){$a$}
\drawedge[curvedepth=-.7,ELside=r](1,2){$b$}
\drawedge[curvedepth=-.7,ELside=r](2,dots1){$b$}
\drawedge[curvedepth=-.7,ELside=r](dots1,2){$c$}
\drawedge[curvedepth=-.7,ELside=r](dots1,h-1){$b$}
\drawedge[curvedepth=-.7,ELside=r](h-1,dots1){$c$}
\drawedge[curvedepth=-.7,ELside=r](h-1,h){$b$}
\drawedge[curvedepth=-.7,ELside=r](h,h+1){$b$}
\drawedge[curvedepth=-.7,ELside=r](h+1,dots2){$b$}
\drawedge[curvedepth=-.7,ELside=r](dots2,h+1){$c$}
\drawedge[curvedepth=-.7,ELside=r](dots2,n-3){$b$}
\drawedge[curvedepth=-.7,ELside=r](n-3,dots2){$c$}
\drawedge[curvedepth=-.7,ELside=r](n-3,1){$b$}
\drawedge[curvedepth=-4.2,ELside=r,ELpos=30](h+1,h-1){$c$}
\drawedge[curvedepth=-4.2,ELside=r,ELpos=30](2,n-3){$c$}
\drawedge[curvedepth=-3,ELpos=25](1,h){$c$}
\end{picture}\end{center}
\caption{Automaton $\mathcal{D}_n$ from Definition~\ref{def:most_complex_operations}. Empty state $n-1$ and the transitions going $n-1$ are omitted.}\label{fig:most_complex_operations}
\end{figure}
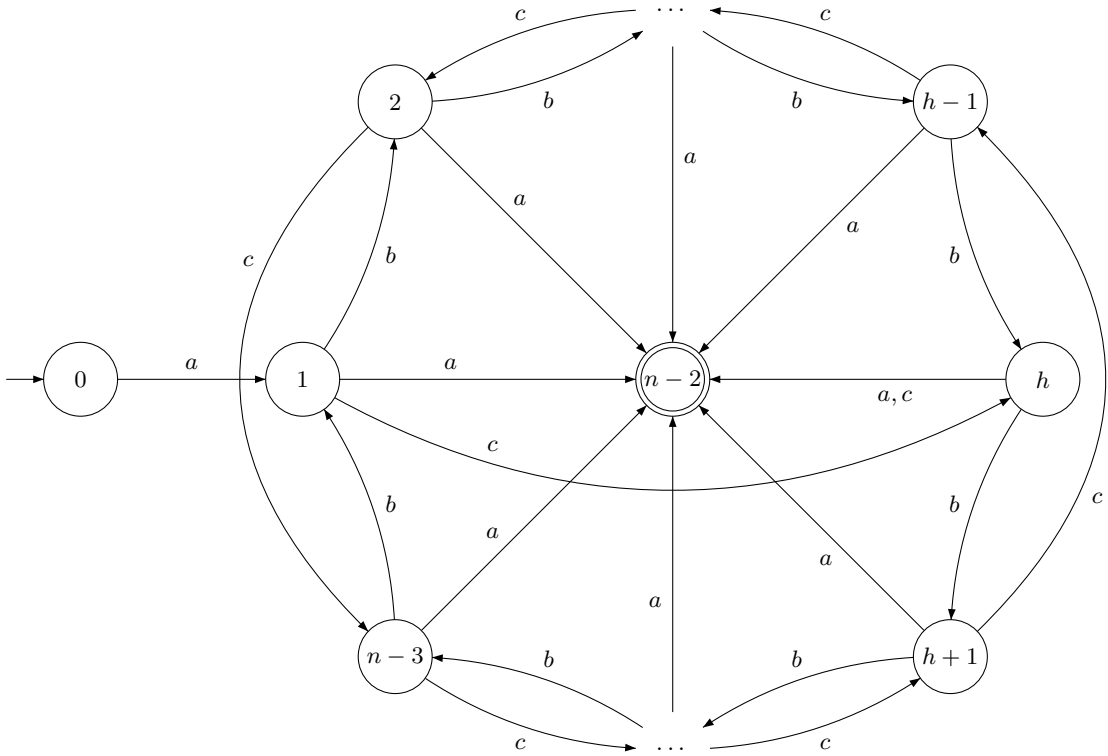

\begin{theorem}\label{thm:most_complex_operations}
The DFA $\mathcal{D}_n$ from Definition~\ref{def:most_complex_operations} is minimal, recognizes a bifix-free language $L_n(a,b,c)$, has most complex quotients, and its transition semigroup is a subsemigroup of $\Wbf(n)$.
The stream $(L_n(a,b,c) \mid n \ge 9)$ with some permutationally equivalent dialects meets all the bounds for basic operations as follows:
\begin{itemize}
\item $L_m(a,b,c)$ and $L_n(a,c,b)$ meets the bound $mn-(m+n)$ for union and symmetric difference, the bound $mn-3(m+n-4)$ for intersection and the bound $mn-(2m+3n-9)$ for difference.
\item $L_m(a,b,c)$ and $L_m(a,b,c)$ meets the bound $m+n-2$ for product.
\item $L_m(a,b,c)$ meets the bound $n-1$ for star.
\item $L_m(a,b,c)$ meets the bound $2^{n-3}+2$ for reversal.
\end{itemize}
\end{theorem}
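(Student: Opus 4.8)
The statement packages several claims, so I would first settle the intrinsic properties of $\mathcal{D}_n$ and then treat the operations one at a time. Reading off Definition~\ref{def:most_complex_operations}, $\delta_a$ has the shape of type~3 and $\delta_b,\delta_c$ have the shape of type~1 in Definition~\ref{def:Wbf}, so the transition semigroup lies in $\Wbf(n)$; since $n-2$ is the unique final state with $(n-2)\delta_x=n-1$ for every letter and $n-1$ is a sink, Lemma~\ref{lem:bifix-free_properties} yields bifix-freeness. Minimality I would check by noting that each middle state $k$ is reached from $0$ by $ab^{k-1}$, and that two distinct states $p\ne q$ of $Q_M$ are separated by a word $b^{j}c$, where $j$ is chosen so that $pb^{j}=h$: then $pb^{j}c=n-2$ is final while $qb^{j}c\in Q_M$ is not. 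For the ``most complex quotients'' claim I would observe that for any $q\in Q_M$ the DFA started at $q$ reaches exactly $Q_M\cup\{n-2,n-1\}$ (state $0$ is unreachable because the language is non-returning), giving a minimal automaton on $n-1$ states, so every non-initial, non-trivial quotient attains the bound of Proposition~\ref{pro:quotients_upper_bound}.

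The product and star bounds are then immediate. Every pair of bifix-free languages meets $m+n-2$ for product by Theorem~\ref{thm:product}, and since $\delta_b$ maps $0\in\{0,\dots,n-3\}$ to the empty state $n-1$, Theorem~\ref{thm:star} forces the star complexity to be exactly $n-1$.

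For reversal I would run the reverse-and-determinize construction and show that the $2^{n-3}$ subsets of $Q_M$ together with $\{0\}$ and $\{n-2\}$ are all reachable and pairwise distinguishable, which gives $2^{n-3}+2$. Two seeds come for free: $\{n-2\}\delta_a^{-1}=Q_M$ produces the full middle set and $\{n-2\}\delta_c^{-1}=\{h\}$ a singleton, while $\{0\}$ is reached by $\delta_a^{-1}$ from any subset containing $1$. To move between cardinalities I would use that $\delta_b^{-1}$ acts on subsets of $Q_M$ as a single $(n-3)$-cycle, whereas $\delta_c^{-1}$ deletes exactly the element $1$ and permutes the rest, because $1$ has no $c$-preimage inside $Q_M$ (as $\delta_c$ sends $h$ out of $Q_M$). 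Rotating a chosen element into position $1$ with $\delta_b^{-1}$ and then applying $\delta_c^{-1}$ removes it, so a downward induction on cardinality starting from $Q_M$ reaches every subset; distinguishability follows, as in the proof of Theorem~\ref{thm:reversal_magic}, by a word whose inverse transformation routes an element of the symmetric difference back to $0$.

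The Boolean operations form the core of the argument and the main obstacle. Here I would take the pair $L_m(a,b,c)$ and $L_n(a,c,b)$ and analyze the direct product DFA on $Q'\times Q$. The four distinct bounds $mn-(m+n)$ (union and symmetric difference), $mn-3(m+n-4)$ (intersection), and $mn-(2m+3n-9)$ (difference) all arise from the distinguished states $0',(m-2)',(m-1)',0,n-2,n-1$: the rows and columns indexed by the empty, final, and initial states either collapse under equivalence or fail to be reachable, and exactly how many are lost depends on which of these states are final for the given operation. The genuinely delicate step is reachability of every product state $(p,q)$ with $p\in Q_M'$ and $q\in Q_M$: reading $a$ from $(0',0)$ lands in the middle block at $(1',1)$, after which letter $b$ cycles the first coordinate fully while letter $c$ cycles the second coordinate fully, the swap of $b$ and $c$ in the second automaton being precisely what supplies an independent full cycle on each coordinate. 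The complication is that each of these letters can also push the other coordinate out of the middle block (through the states $h$ and $h'$ that the $c$-action maps to the final state), so I expect the hard part to be organizing the rotations so that every middle pair is reached while avoiding premature escape, and then verifying the case-by-case distinguishability of the product states; everything else reduces to the structural observations above and the earlier theorems.
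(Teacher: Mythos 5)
Most of your proposal tracks the paper's actual proof closely and correctly: minimality via reaching words $ab^{k-1}$ and separating words ending in $c$; placing the generators in $\Wbf(n)$ and deducing condition~3 of Lemma~\ref{lem:bifix-free_properties} from membership in $\Wbf(n)$; the quotient claim; the appeals to Theorem~\ref{thm:product} and Theorem~\ref{thm:star} for product and star; and your reversal argument (seed $Q_M=\{n-2\}\delta_a^{-1}$, downward induction on cardinality using $\delta_b^{-1}$-rotations to move the target set off $h$ and $\delta_c^{-1}$ to delete the element $1$, then distinguishing by words routing an element of the symmetric difference to $0$) is essentially identical to the paper's.

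The genuine gap is the Boolean operations, which constitute the bulk of the theorem and of the paper's proof. You correctly set up the product automaton for $L_m(a,b,c)$ and $L_n(a,c,b)$ and correctly identify the two hard steps --- reachability of all of $Q'_M\times Q_M$ and the per-operation distinguishability analysis --- but you do not carry out either one; you explicitly defer them (``I expect the hard part to be organizing the rotations\dots'', ``verifying the case-by-case distinguishability''). These steps are not routine, and your stated plan would fail as described: from $(1',1)$, reached by $a$, one application of $b$ sends the second coordinate $1\mapsto h$ (since $b$ acts as $\delta_c$ in the dialect) and a second application sends $h\mapsto n-2$, so the pair escapes the middle block after two steps; the same happens in the first coordinate under $c$. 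The paper circumvents exactly this by first reaching $(2',h)$ via $ab$ and filling the whole row $\{2'\}\times Q_M$ using the words $cb^2$ and $bc$, whose actions fix $2'$; only then do independent rotations $b^i$ and $c^i$ cover $Q'_M\times Q_C\,\cup\,Q'_C\times Q_M$ (with $Q_C=Q_M\setminus\{1,h\}$), and the four leftover states in $\{1',h'\}\times\{1,h\}$ plus the boundary rows and columns are handled by separate explicit words. Likewise, obtaining the four different bounds $mn-(m+n)$, $mn-3(m+n-4)$, and $mn-(2m+3n-9)$ requires identifying precisely which boundary states merge under each operation (e.g., for union the states $((m-2)',n-2)$, $((m-2)',n-1)$, $((m-1)',n-2)$ are equivalent) and exhibiting distinguishing words for every remaining pair, with a different case analysis for union, symmetric difference, intersection, and difference. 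None of this appears in your proposal, so the claimed bounds for the four Boolean operations remain unproven.
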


\begin{proof}

\noindent\textbf{Minimality}:

First we need to show that $\mathcal{D}_n$ is minimal.
For every state Table~\ref{tab:distinguishing_words} depicts a word by which that state is reached, and a word that is accepted from this state and is not accepted from any other state, thus shows that all states are reachable and distinguishable.

\renewcommand{\arraystretch}{1.2}
{\small\begin{table}[ht]\centering
\caption{Reaching and distinguishing words for stream from Definition ~\ref{def:most_complex_operations} }\label{tab:distinguishing_words}
\begin{tabular}{|l|c|c|}\hline
State        & Reachable by  & Distinguishing accepted word\\ \hline\hline
$0$          & $\varepsilon$ & $ac^2$   \\ \hline
$ q \in Q_M$ & $ab^{q-1}$    & $b^{n-2-q}c^2$   \\ \hline
$n-2$        & $a^2$         & $\varepsilon$   \\ \hline
$n-1$        & $a^3$         & no word exists   \\ \hline
\end{tabular}
\end{table}}

\noindent\textbf{Most complex quotients}:

Since from every state $q \in \{1,\ldots,n-3\}$ we can reach any ostate from $Q \setminus \{0\}$, their quotients have state complexity $n-1$, and so the language of $\mathcal{D}_n$ meets the bound from Proposition~\ref{pro:quotients_upper_bound}.

\noindent\textbf{Subsemigroup of $\Wbf(n)$}:

Consider a non-empty word $w$ and its induced transformation $\delta_w$ from the transition semigroup $T(n)$ of $\mathcal{D}_n$.
We will show that $\delta_w \in \Wbf(n)$.
If $0\delta_w = n-1$ then $\delta_w$ is of type~1, since there are no transitions going to state $0$.
If $0\delta_w = n-2$ then $w$ starts with $a$ and has length at least $2$, so $q\delta_w = n-1$ for all $q \in Q_M$, and so $\delta_w$ is of type~2.
If $0\delta_w \in Q_M$ then also $w$ starts with $a$, so $q\delta_w \in \{n-2,n-1\}$ for all $q \in Q_M$, and so $\delta_w$ is of type~3.

\noindent\textbf{Recognizing a bifix-free language}:

It is enough to observe that $\mathcal{D}_n$ satisfies the three sufficient and necessary conditions from Lemma~\ref{lem:bifix-free_properties}.
Conditions~1 and~2 are trivially satisfied by states $n-1$ and $n-2$.
For condition~3 note that all words inducing a transformation in $\Wbf(n)$ and so in $T(n)$ satisfy it.

\noindent\textbf{Product and star:}

From Theorem~\ref{thm:product} we know that $L_m(a,b,c)L_n(a,b,c)$ meets the bound for product.
Since transformation $\delta_b$ maps state $0$ to state $n-1$, by Theorem~\ref{thm:star} we know that $L^*_n(a,b,c)$ meets the bound $n-1$ for star.

\noindent\textbf{Reversal:}

Consider the standard construction of the NFA $\mathcal{N}$ obtained by reversing the edges in $\mathcal{D}_n$ and its determinization to $\mathcal{D}_R$, which recognizes $L^R_n(a,b,c)$. Hence, every word $w$ in $\mathcal{D}_R$ induces a transformation acting on $2^Q$ as $\delta_w^{-1}$.

First we show that subsets (states of $\mathcal{D}_R$) $\{n-2\}$, $\{0\}$, and each of the subsets of $Q_M$ is reachable in $\mathcal{D}_R$.
Subset $\{n-2\}$ is reachable as it is the initial state of $\mathcal{D}_R$, and we have $\{n-2\} \delta_a^{-1} = Q_M$ and $Q_M \delta_a^{-1} = \{0\}$.
We show by reverse induction on cardinality that every subset of $Q_M$ is reachable.
Suppose that any subset of $Q_M$ of size $k$ is reachable.

Consider a subset $S \subset Q_M$ of size $k-1 < n-3$.
Since $S$ is a proper subset of $Q_M$, it is reachable from a subset $S'$ of size $k$ such that $h \notin S'$ by the rotation $(\delta_b^{-1})^i$ for some $i \in \{0,\ldots,n-4\}$.
Let $S'' = \{1\} \cup S' \delta_c$.
We have $\{1\} \delta_c^{-1} = \emptyset$.
Also, since $\delta_c^{-1}$ acts like like a cycle on $Q_M \setminus \{1,h\}$, $h \notin S'$, and $1 \delta_c = h$, we know that $S' \delta_c$ has size $|S'|$, does not contain $1$, and contains $h$ if and only if $S'$ contains $1$.
Then also $(S' \delta_c)\delta_c^{-1} = S'$.
Hence, $S''$ has size $|S|+1 = k$, and the induction step holds.

It remains to show that all the listed subsets are pairwise distinguishable.
Only $\{n-2\}$ accepts $a^2$, and only $\{0\}$ is final.
Consider any two different subsets $S,T \subseteq Q_M$.
Without loss of generality let $q \in S \setminus T$.
Then $\mathcal{D}_R$ from $S$ accepts the word $b^{q-1}a$ whereas from $T$ it does not.

\noindent\textbf{Reachability for Boolean operations:}

Let $\mathcal{D}'_m(Q',\Sigma',\delta',0',\{(m-2)'\})$, where $Q'=\{0',\ldots,(m-1)'\}$, be $\mathcal{D}_m$ of $L_m(a,b,c)$ with the states renamed, and let $\mathcal{E}_n(Q,\Sigma,\delta'',0,\{n-2\})$ be the DFA of the dialect $L_n(a,b,c)$ obtained from $\mathcal{D}_n$ by swapping the transitions of letters $b$ and $c$.
Let $\mathcal{D}_P(Q' \times Q,\Sigma,\delta^P,(0',0),F^P)$ be the direct product automaton of $\mathcal{D}'_m$ and $\mathcal{E}_n$.

The set of final states $F^P$ depends on the particular operation and does not matter for reachability.
Therefore, we have $(p',q) \delta^P_a = (p' \delta'_a, q \delta''_a) = (p' \delta'_a, q \delta_a)$, $(p',q) \delta^P_b = (p' \delta'_b, q \delta''_c) = (p' \delta'_b, q \delta_b)$, and $(p',q) \delta^P_c = (p' \delta'_c, q \delta''_b) = (p' \delta'_c, q \delta_c)$.
We will show that all $mn-(m+n)+2$ states in $\mathcal{D}_P$ are reachable, namely the initial state $(0',0)$ and all $(p',q)$ for $p' \in Q' \setminus \{0'\}$, $q \in Q \setminus \{0\}$.

Consider a pair $(p',q)$ with $p' \in Q_M'$ and $q \in Q_M$.
Let $Q_C'=Q_M' \setminus \{1',h'\}$ and $Q_C=Q_M \setminus \{1,h\}$.
First we show that any state of the form $(2',q)$ with $q \in Q_M$ is reachable.
Then we show as a consequence of the previous fact that any state from the set $Q_C' \times Q_M \cup Q_M' \times Q_C$ is reachable.
In the third step we show that the states in $\{1',h'\}\times \{1,h\}$ are also reachable.

State $(2',h)$ can be reached from $(0',0)$ by the word $ab$.
Consider the following two transformations that fix state $2'$ in $\mathcal{D}'_m$: $cb^2$ and $bc$.
Note that states $(2',h-2),(2',h-3),\ldots,(2',3),(2',2),(2',n-3)$ are reachable from $(2',h)$ by the words $(cb^2)^i$ for $i=1,\ldots,h-2$.
Moreover, for $n \ge 9$, since $h \ge 4$, we have $(2',2)\delta^P_{bc}=(2',1)$.
Next, from $(2',1)$ states $(2',n-4),(2',n-5),\ldots,(2',h+2),(2',h+1),(2',h-1)$ are reachable by the words $(cb^2)^i$ for $i=1,\ldots,n-3-h$.
So we have shown that all states from $\{2'\} \times Q_M$ are reachable.

In $\mathcal{D}_n$, transformation $\delta_b$ restricted to $Q_M$ and transformation $\delta_c$ restricted to $Q_C$ are cycles, and so we can map any state from $Q_M$ to any other state from $Q_M$ by $\delta_b^i$ for some $i$, and similarly for $Q_C$ with $\delta_c^i$.
Now we show that states $(p',q) \in Q_M' \times Q_C$ are reachable in $\mathcal{D}^P$.
Let $i$ be such that $2' \delta'_{b^i} = p'$ and let $r \in Q_M$ be such that $r \delta_{c^i} = q$.
Then we have $(2',r) \delta^P_{b^i} = (2' \delta'_{b^i}, r \delta_{c^i}) = (p',q)$, thus $(p',q)$ is reachable.
By the symmetric argument, states $(p',q) \in Q_C' \times Q_M$ are also reachable.

Table~\ref{tab:four_states_reachablility} shows how to reach the four special states in $\{1',h'\} \times \{1,h\}$.

\renewcommand{\arraystretch}{1.2}
{\small\begin{table}[ht]\centering
\caption{Reachability of states in $\{1',h'\} \times \{1,h\}$}.\label{tab:four_states_reachablility}
\begin{tabular}{|l|c|c|}\hline
State     & Reachable from & By word \\ \hline\hline
$(1',1)$  & $(0',0)$       & $a$     \\ \hline
$(1',h)$  & $((m-3)',1)$   & $b$     \\ \hline
$(h',1)$  & $(1',n-3)$     & $c$     \\ \hline
$(h',h')$ & $(1',1)$       & $(bc)^2$ \\ \hline
\end{tabular}
\end{table}}

Finally we show that the states $(p',q)$ where $p' \in \{(m-2)',(m-1)'\}$ or $q \in \{n-2,n-1\}$ are reachable.
State $((m-2)',n-2)$ is reachable from $(0',0)$ by $a^2$, state $((m-2)',n-1)$ by $a b^2 a$, state $((m-1)',n-2)$ by $a c^2 a$, and state $((m-1)',n-1)$ by $a^3$.
Now, using symmetry, without loss of generality, we can assume that $p' \in \{(m-2)',(m-1)'\}$ and $q \in Q_M$.
If $p' = (m-2)'$, then $((m-2)',q)$ is reachable by $c$ from $(h',q-1)$ or from $(h',n-3)$ if $q=1$.
If $p' = (m-1)'$, then similarly $((m-1)',q)$ is reachable by $c$ from $((m-2)',q-1)$, or from $((m-2)',n-3)$ if $q=1$.

\noindent\textbf{Union}:

The union is recognized by the product automaton $\mathcal{D}^P$ with $F^P = (\{(m-2)'\} \times Q) \cup (Q'_m \times \{n-2\})$.

States $((m-2)',n-2)$, $((m-2)',n-1)$, and $((m-1)',n-2)$ have the same quotients, and therefore are not distinguishable.
We will show that all the remaining reachable states together with one of the above are pairwise distinguishable.
These will be $mn-(m+n)$ states.
Consider two distinct states $(p_1',q_1)$ and $(p_2',q_2)$.
Since union is symmetric, without loss of generality we can assume that $p'_1 \neq p'_2$.

If $p_1' \in Q'_M$, then we can use the distinguishing word $b^{m-2-p_1}c^2$ from Table~\ref{tab:distinguishing_words}. It is accepted from state $(p_1',q_1)$, but is not accepted from state $p_2'$ in $\mathcal{D}'_m(a,b,c)$, nor from state $q_1$ nor $q_2$ in $\mathcal{E}_n$ of $L_n(a,c,b)$, because $c$ is the last letter in this word and $\delta_b=\delta''_c$ in $\mathcal{E}_n$ does not map any state to $n-2$.
If $p_2' \in Q'_M$ we can use the symmetric argument.

It remains to consider the case when $p_1',p_2' \in \{(m-2)',(m-1)'\}$.
Without loss of generality let assume that $p'_1 = (m-2)'$ and $p_2' = (m-1)'$.
If $q_2 \neq n-2$, then $((m-2)',q_1)$ is final but $((m-1)',q_2)$ is not.
If $q_2 = n-2$, then $q_1 \in Q_M$ since otherwise the pairs are not distinguishable. So we may use $a$, because $((m-2)',q_1) \delta^P_a = ((m-1)',n-2)$ is final, but $((m-1)',n-2)\delta^P_a = ((m-1)',n-1)$ is empty.

Finally, state $(0',0)$ is distinguished from every other considered states since it the only state from which $a^2$ is accepted.

\noindent\textbf{Symmetric difference}:

The symmetric difference is recognized by the product automaton $\mathcal{D}^P$ with $F^P = (\{(m-2)'\} \times (Q \setminus \{n-2\}) \cup ((Q'_m \setminus \{(m-2)'\}) \times \{n-2\})$.

State $((m-2)',n-1)$ is equivalent to $((m-1)',n-2)$, and also state $((m-2)',n-2)$ is equivalent to $((m-1)',n-1)$.
We show that two different reachable states $(p_1',q_1)$ and $(p_2',q_2)$ that are not one of these equivalent pairs are pairwise distinguishable.
Without loss of generality we can assume that $p_1' \neq p_2'$.

If $p_1' \in Q'_M$ or $p_2' \in Q'_M$, then we can use the same distinguishing word and argumentation as we do for the union operation, since that word  $\mathcal{E}_n$ is not accepted from $q_2$ nor from $q_1$.

So suppose that $p_1',p_2' \in \{(m-2)',(m-1)'\}$.
Without loss of generality let $p_1'=(m-2)'$ and $p_2' = (m-1)'$ (since $p_1' \neq p_2'$ as we assumed before).

If either $q_1 \neq n-2$ and $q_2 \neq n-2$ or $q_1 = n-2$ and $q_2 = n-2$, then $((m-2)',q_1)$ and $((m-1)',q_2)$ differ, since exactly one of them is final.
If $q_1=n-2$ and $q_2 \in Q_M$, then state $(p_1',q_1)=((m-2)',n-2)$ is empty but state $(p_2',q_2)=((m-1)',q_2)$ is not empty.
The last cast is $q_2=n-2$ and $q_1 \in Q_M$. Then only $((m-2)',q_1)$ can accept a non-empty word (the same as that accepted by $q_1$ in $\mathcal{D}_n$).

\noindent\textbf{Intersection}:

The intersection is recognized by the product automaton $\mathcal{D}^P$ with $F^P = \{(m-2)',n-2)\}$.

We are going to show that $(m-3)(n-3)+3$ of reachable states are pairwise distinguishable. First we have the three special equivalence classes of states: empty states (e.g.~$((m-1)',n-1)$), final state $((m-2)',n-2)$, and the initial state $(0',0)$, which is the unique state from which word $a^2$ is accepted.
The remaining states have form $(p',q)$, where $p' \in Q_M'$ and $q \in Q_M$. All of them are non-empty ($a$ is accepted) and not final. So it is enough to show that two different states from these, $(p_1',q_1)$ and $(p_2',q_2)$, have different quotients.
Without loss of generality we can assume that $p_1' \neq p_2'$.

If $q_1 \in Q_M \setminus \{1,h\}$, then we can first use a word $b^i$ such that $p_2' \delta'_{b^i} = h'$.
Then $p_1' \delta'_{b^i} \neq h'$, since $b$ acts as a cycle on $Q_M$ and $p_1' \neq p_2'$.
Moreover $q_1 \delta_{c^i} \in Q_M \setminus \{1,h\}$.
Now we apply $c$, whose action maps $h'$ to $(m-2)'$.
On the other hand we have $p_1' \delta'_{b^i} \delta'_c \in Q_M$ and $q_1 \delta_{c^i} \delta_b \in Q_M$.
So finally by applying $a$ we get that $(p_1',q_1)\delta^P_{b^i} \delta^P_c \delta^P_a = ((m-2)',n-2)$ is final, but $(p_2',q_2)\delta^P_{b^i} \delta^P_c \delta^P_a = ((m-1)',r)$ for some $r \in Q \setminus \{0\}$, which is an empty state.

If $q_1 \in \{1,h\}$ and $q_2 \in Q_M \setminus \{1,h\}$, then we can the previous argument by symmetry.
So suppose that $q_1,q_2 \in \{1,h\}$.

If $p_1' \neq h'$ then we can apply $c$, which maps $q_1$ and $q_2$ in $\mathcal{E}_n$ to states of $Q_M \setminus \{1,h\}$, since $n \ge 7$ and so $2 < h < n-3$.

We have $(p_1',q_1)\delta^P_c = (p_1' \delta'_c, q_1 \delta_b) \in Q_M' \times (Q_M \setminus \{1,h\})$ and $(p_2',q_2)\delta^P_c = (p_2' \delta'_c, q_2 \delta_b) \in (Q_M' \cup \{(m-2)'\}) \times (Q_M \setminus \{1,h\})$.
If $p_2' \delta'_c \neq (m-2)'$, then we have already shown distinguishability of this pair in the previous paragraph.
If $p_2' \delta'_c = (m-2)'$, then letter $a$ distinguishes the pair.
Finally, if $p_1' = h'$ then $p_2' \neq h'$ and we can use the symmetrical argument.

\noindent\textbf{Difference}:

The difference is recognized by the product automaton $\mathcal{D}^P$ with $F^P = (\{(m-2)'\} \times Q) \setminus \{((m-2)',n-2)\}$.

We will show that $mn-2m-3n+9$ of reachable states are pairwise distinguishable.
First we have the three special equivalence classes of states: empty states (e.g.~$((m-1)',n-1)$ or $((m-2)',n-2)$), final states (e.g.~$((m-2)',n-1)$ or $((m-2)',1)$), and initial state $(0',0)$.
Initial state is distinguished from all reachable states since it is the only one from which word $ac^2$ is accepted.
In the second group we have $m-3$ different states of the form $(p',n-1)$, where $p_1' \in Q_M'$, which are pairwise distinguished in the same way as in $\mathcal{D}_m'$, and are not empty nor final.
The last group consists of states of the form $(p',q)$, where $p' \in Q_M'$ and $q \in Q_M$.
They are not final and they are non-empty, because a word $b^i c$, for some $i$, is accepted from them.
They are distinguished from states $(p_1',n-1)$ from the second group, since word $a$ is not accepted from them, but it is accepted from $(p_1',n-1)$.
Consider two distinct states $(p_1',q_1)$ and $(p_2',q_2)$ from the third group.
We will show that they are distinguishable.

If $p_1' \neq p_2'$, then let take the distinguishing word $b^{n-2-p_1}c^2$ for $p_1'$ from Table~\ref{tab:distinguishing_words}.
It is accepted from $(p_1',q_1)$, since a word ending with $c$ cannot be accepted from $q_1$ in $\mathcal{E}_n$.
Also, as a distinguishing word, it is not accepted from $p_2'$ in $\mathcal{D}'_n$ and so not from $(p_2',q_2)$ in the product automaton.

In the opposite case $p_1'=p_2'$, so $q_1 \neq q_2$.
First suppose that $p_1'=p_2' \in Q'_M \setminus \{1',h'\}$.
Let $i$ be such that $q_1 \delta_{b^i} = h$; then $q_2 \delta_{b^i} \neq h$.
We have $(p_1',q_1)\delta^P_{c^i b a} = (p_1' \delta'_{c^i b a}, q_1 \delta_{b^i c a}) = ((m-2)',n-1))$, which is a final state.
On the other hand $(p_2',q_2) \delta^P_{c^i b a} = (p_2' \delta'_{c^i b a}, q_2 \delta_{b^i c a})=((m-2)',n-2)$, which is an empty state.
Now suppose that $p_1'=p_2' \in \{1',h'\}$.
If $q_1,q_2 \in Q_M' \setminus \{h\}$, then by applying $b$ we result in the previous case, since $m \ge 7$.
Otherwise one of $q_1$ and $q_2$, say $q_1$ without loss of generality, is equal $h$.
Then we use $ba$ and obtain $(p_1',q_1) \delta^P_{b a} = ((m-2)',h \delta_{c a}) = ((m-2)',n-1)$, which is final, and $(p_2',q_2) \delta^P_{b a} = ((m-2)',q_2 \delta_{c a}) = ((m-2)',n-2)$, which is empty.
\end{proof}

\subsection{Most complex stream}

Here we define a most complex stream for all three measures of complexity.
To meet the bound for syntactic complexity an alphabet of size at least $(n-3) + ((n-2)^{n-3}-1) + (n-3)(2^{n-3}-1) = (n-2)^{n-3} + (n-3)2^{n-3} - 1$ is required, and so a witness stream cannot have a smaller number of letters.
Our stream contains the DFAs from \cite[Definition~4]{SzWi16SyntacticComplexityOfBifixFree}, which have the transition semigroup $\Wbf(n)$.

\begin{definition}[Most complex stream, {\cite[Definition~4]{SzWi16SyntacticComplexityOfBifixFree}}]\label{def:most_complex}
For $n \ge 6$, we define the language $W_n$ which is recognized by the DFA $\mathcal{W}_n$ with $Q = \{0,\ldots,n-1\}$ and $\Sigma$ containing the following letters:
\begin{enumerate}
\item $b_i$, for $1 \le i \le n-3$, inducing the transformations $(0 \to n-1)(i \to n-2)(n-2 \to n-1)$,
\item $c_i$, for every transformation of type~(2) from Definition~\ref{def:Wbf} that is different from $(0 \to n-2)(Q_M \to n-1)(n-2 \to n-1)$,
\item $d_i$, for every transformation of type~(3) from Definition~\ref{def:Wbf} that is different from $(0 \to q)(Q_M \to n-1)(n-2 \to n-1)$ for some state $q \in Q_M$.
\end{enumerate}
\end{definition}

\begin{theorem}\label{thm:most_complex_stream}
The stream $(W_n \mid n \ge 9)$ is most complex in the class of bifix-free languages:
\begin{enumerate}
\item The quotients of $W_n$ have maximal state complexity (Proposition~\ref{pro:quotients_upper_bound});
\item $W_m$ and $W'_n$ meet the bounds for union, intersection, difference, symmetric difference, where $W'_n$ is a permutationally equivalent dialect of $W_n$;
\item $W_m$ and $W_n$ meet the bound for product;
\item $W_n$ meets the bounds for reversal and star;
\item $W_n$ meets the bound for the syntactic complexity;
\item $W_n$ meets the bounds for the number of atoms and the quotient complexities of atoms (see Theorem~\ref{thm:atoms_upper_bound}).
\end{enumerate}
Moreover, the size of its alphabet is a smallest possible.
\end{theorem}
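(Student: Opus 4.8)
The plan is to verify each of the six claimed properties of the stream $(W_n \mid n \ge 9)$ separately, exploiting throughout the fact that the transition semigroup of $\mathcal{W}_n$ is exactly $\Wbf(n)$, the largest bifix-free semigroup. This is the central leverage point: because $\Wbf(n)$ contains \emph{all} transformations of each of the three types, whenever a proof needs a transformation with a particular prescribed action on $Q$ (focusing a pair, mapping a subset to $n-2$, permuting $Q_M$ arbitrarily, etc.), such a transformation is guaranteed to be present as some $\delta_w$. This lets us reuse the reachability and distinguishability arguments from the upper-bound theorems almost verbatim, replacing ``there exists a suitable transformation'' with ``the required letter is in $\Sigma$.''

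First I would dispatch the self-contained items. Item~(1) follows immediately since every middle state can reach every state of $Q \setminus \{0\}$, matching Proposition~\ref{pro:quotients_upper_bound}. Item~(3), the product bound, is free from Theorem~\ref{thm:product}, which holds for \emph{every} pair of bifix-free languages. For star in item~(4), by Theorem~\ref{thm:star} it suffices to observe that some letter (any $b_i$, or any type-1 generator) maps a state of $\{0,\ldots,n-3\}$ to $n-1$. Item~(5), the syntactic complexity, is exactly the content of \cite{SzWi16SyntacticComplexityOfBifixFree}: the transition semigroup is $\Wbf(n)$, whose cardinality is the claimed formula $(n-1)^{n-3}+(n-2)^{n-3}+(n-3)2^{n-3}$. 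The minimal-alphabet claim at the end likewise follows by combining the lower bounds already established — the syntactic lower bound forces $(n-2)^{n-3}+(n-3)2^{n-3}-1$ letters, dominating all the others.

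The substantive work lies in items~(2), (4)-reversal, and~(6). For reversal, I would run the reversal construction on $\mathcal{W}_n$ and show each subset of $Q_M$, plus $\{0\}$ and $\{n-2\}$, is reachable in $\mathcal{D}_R$; reachability of an arbitrary $S \subseteq Q_M$ follows because for any pair $\{p,q\}$ there is a type-1 focusing transformation, and more generally $\Wbf(n)$ realizes every permutation and every collapse on $Q_M$, so the preimage structure needed to reach each subset is available. Distinguishability is routine. For item~(6), I would invoke Theorem~\ref{thm:atoms_upper_bound} for the bounds and mimic the lower-bound construction of Theorem~\ref{thm:atoms_lower_bound}: in the atom DFA $\mathcal{D}_S$, reach each admissible pair $(X,Y)$ and the final state $(\{n-2\},\{n-1\})$ using type-1 transformations (available since $\Wbf(n)$ is full), and distinguish pairs exactly as in that proof. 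The only care needed is for $S=\emptyset$ and for handling the type-3 letter that maps $0$ into $Q_M$, which is present and used to distinguish the initial quotient.

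The main obstacle will be item~(2), the Boolean operations, because binary operations require both operands over the \emph{same} alphabet, yet $W_m$ and $W_n$ have alphabets of different sizes that depend on $n$. The resolution is to pass to a common permutational dialect $W'_n$: restrict the letters of $W_n$ to a subset realizing the same key transformations (one type-3 letter, enough type-1 permutations of $Q_M$, and the focusing/collapsing letters) so that the product automaton of $\mathcal{W}_m$ and $\mathcal{W}'_n$ admits the full reachability-and-distinguishability analysis. Concretely, I would choose the dialect so that the induced transformations on the $Q_M$-part include a full symmetric-group action (as in the ternary stream, where $b,c$ generate all permutations), guaranteeing that every target pair $(p',q)$ with $p' \in Q_M'$, $q \in Q_M$ is reachable, and then reuse the four Boolean distinguishability arguments from Theorem~\ref{thm:most_complex_operations} essentially unchanged, since those arguments only rely on the existence of transformations permuting $Q_M$ and collapsing to $n-2$, all of which $\Wbf(n)$ supplies. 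The delicate point is verifying that a single fixed permutational dialect simultaneously serves all four Boolean operations against $W_m$; I expect this to follow because the distinguishing words used for union, symmetric difference, intersection, and difference all draw from the same small repertoire of transformations that the full semigroup $\Wbf(n)$ contains.
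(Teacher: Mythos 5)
Your proposal is correct and takes essentially the same route as the paper: items (1), (3), (5) and star are delegated to the earlier theorems; reversal and the atom bounds are obtained by exploiting that the transition semigroup $\Wbf(n)$ contains every transformation the lower-bound constructions require; the Boolean operations are handled by a permutational dialect that replays the ternary-stream analysis of Theorem~\ref{thm:most_complex_operations} inside the product automaton; and alphabet minimality follows from the syntactic-complexity lower bound. The paper merely makes your item-(2) plan concrete — the action of $a$ is realized by a single letter $d_{i_1}$, while those of $b$ and $c$ are realized by the two-letter words $c_{i_2}c_{i_3}$ and $c_{i_4}b_1$, with $W'_n$ obtained by swapping exactly those letters (a small correction to your sketch: in Definition~\ref{def:most_complex} the letters permuting $Q_M$ are of type~2, not type~1, the type~1 letters being only the collapses $b_i$, so the $b$-action needs a word rather than a letter).
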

\begin{proof}
1) As in the proof of Theorem~\ref{thm:most_complex_operations}.

2) Note that in the DFAs from Definition~\ref{def:most_complex_operations}:
\begin{itemize}
\item letter $a$ acts as some letter $d_{i_1}$ in $\mathcal{W}_n$;
\item the action of letter $b$ is induced by the word $c_{i_2} c_{i_3}$, where $c_{i_2}$ induces $(0 \to n-2)(1,\ldots,n-3)(n-2 \to n-1)$, and $c_{i_3}$ induces $(0 \to n-2)(n-2 \to n-1)$;
\item the action of letter $c$ is induced by the word $c_{i_4} b_1$, where $c_{i_4}$ induces $(0 \to n-2)(1,h)(n-3,\ldots,h+1,h-1,\ldots,2)(n-2 \to n-1)$, and $b_1$ induces $(0 \to n-1)( 1 \to n-2)(n-2 \to n-1)$.
\end{itemize}
Let $W'_n$ be a permutationally equivalent dialect of $W_n$ in which the letters $c_{i_2}$ and $c_{i_3}$ are swapped with $c_{i_4}$ and $b_1$.
Then by Theorem~\ref{thm:most_complex_operations} $W_m$ and $W'_n$ meets the bounds for Boolean operations.

3) Meeting the bound for product follows directly from Theorem~\ref{thm:product}.

4) For reversal observe that every non-empty subset $S \subseteq Q_M$ is reachable from $\{n-2\}$ by the letter $d_i$ inducing the transformation $(0 \to q)(S \to n-2)(n-2 \to n-1)$. Then $\{0\}$ is reached from $\{1\}$ by the letter $d_j$ inducing the transformation $(0 \to 1)(Q_M \to n-2)(n-2 \to n-1)$, and the empty subset from $\{0\}$ by any transformation.
Two distinct subsets $S_1,S_2 \subseteq Q_M$ are distinguished by the letter $d_k$ inducing the transformation $(0 \to q)(S \to n-2)(n-2)$, where, without loss of generality, $q$ is such that $q \in S_1$ and $q \notin S_2$.
Meeting the bound for star follows from Theorem~\ref{thm:star}.

5) Since the transition semigroup of $\mathcal{W}_n$ is $\Wbf(n)$, $W_n$ meets the bound for the syntactic complexity. Also, the size of the alphabet to meet this bound cannot be reduced (\cite{SzWi16SyntacticComplexityOfBifixFree}).

6) Since the transition semigroup of the DFA from Theorem~\ref{thm:atoms_lower_bound} is a subsemigroup of $\Wbf(n)$, we have all their transformations in the transition semigroup of $\mathcal{W}_n$; hence $W_n$ also meets the bounds for atom complexities.

\end{proof}


\section{Conclusions}

\renewcommand{\arraystretch}{1.3}
{\small\begin{table}[htb]\centering
\caption{A summary of complexity of bifix-free languages for $n \ge 6$ with the minimal sizes of the alphabet required to meet the bounds.}\label{tab:complexities}
\begin{tabular}{|l|r|r|}\hline
Measure                 & Tight upper bound          & Minimal alphabet \\ \hline\hline
Union $L_m \cup L_n$                  & $mn-(m+n)$                 & 3   \\ \hline
Symmetric difference $L_m \oplus L_n$ & $mn-(m+n)$                 & 3   \\ \hline
Intersection $L_m \cap L_n$           & $mn-3(m+n-4)$              & 2   \\ \hline
Difference $L_m \setminus L_n$        & $mn-(2m+3n-9)$             & 2   \\ \hline
Product $L_m L_n$                     & $m+n-2$                    & 1   \\ \hline
Star $L_n^*$                          & $n-1$                      & 2   \\ \hline
Reversal $L_n^R$                      & $2^{n-3}+2$                & 3   \\ \hline
\multirow{2}{*}{Syntactic complexity of $L_n$} & $(n-1)^{n-3}+(n-2)^{n-3}+$ & $(n-2)^{n-3}+$ \\
                                      & $(n-3)2^{n-3}$ & $(n-3)2^{n-3} - 1$ \\ \hline
Atom complexities $\kappa(A_S)$ & The bounds from Theorem~\ref{thm:atoms_upper_bound} & $n+1$ \\ \hline
\end{tabular}\end{table}}

We completed the previous results concerning complexity of bifix-free languages. The bounds for each considered measure are summarized in Table~\ref{tab:complexities}.
Our particular contribution is exhibition of a single ternary stream that meets all the bounds on basic operations.
Then we showed a most complex stream that meets all the upper bounds of all three complexity measures.

{\small\begin{table}[htb]\centering
\caption{The minimal sizes of the alphabet in a universal most complex stream for some of the studied subclasses of regular languages.}\label{tab:most_complex_streams}
\begin{tabular}{|l|r|}\hline
Class                                              & Minimal alphabet of a most complex stream(s) \\ \hline
Regular languages \cite{Brz13}                     & $3$ \\ \hline
Right ideals      \cite{BDL15}                     & $4$ \\ \hline
Left ideals       \cite{BDL15}                     & $5$ \\ \hline
Two-sided ideals  \cite{BDL15}                     & $6$ \\ \hline
Prefix-free \cite{BrSi16PrefixConvex}              & $n+2$ \\ \hline
Prefix-closed \cite{BrSi16PrefixConvex}            & $4$ \\ \hline
$k$-proper prefix-convex \cite{BrSi16PrefixConvex} & $7$ \\ \hline
Suffix-free \cite{BrSz15ComplexityOfSuffixFree}    & $\le 3$ and $5$ \\ \hline
Bifix-free (Theorem~\ref{thm:most_complex_stream}) & $(n-2)^{n-3}+(n-3)2^{n-3} - 1$ \\ \hline
Non-returning \cite{BrSi17NonReturning} & $n(n-1)/2$ \\ \hline
\end{tabular}\end{table}}

It is worth noting how the properties of prefix-free and suffix-free languages are shared in the class of bifix-free languages.
It is known that there does not exist such a stream in the class of suffix-free languages, even considering only basic operations.
Hence, although the classes of bifix-free and suffix-free languages share many properties, such as a similar structure of the largest semigroups, the existence of most complex languages distinguishes them.
This is because the bounds for star and product are much smaller for bifix-free languages and are very easily met.
Additionally, a most complex stream of bifix-free languages requires a superexponential alphabet, which is much larger than in most complex streams of the other studied subclasses; see Table~\ref{tab:most_complex_streams}.

\bibliographystyle{plain}

\begin{thebibliography}{10}

\bibitem{BPR09}
J.~Berstel, D.~Perrin, and C.~Reutenauer.
\newblock {\em Codes and Automata}.
\newblock Cambridge University Press, 2009.

\bibitem{Brz10a}
J.~A. Brzozowski.
\newblock Quotient complexity of regular languages.
\newblock {\em J. Autom. Lang. Comb.}, 15(1/2):71--89, 2010.

\bibitem{Brz13}
J.~A. Brzozowski.
\newblock In search of the most complex regular languages.
\newblock {\em Int. J. Found. Comput. Sc.}, 24(6):691--708, 2013.

\bibitem{BDL15}
J.~A. Brzozowski, S.~Davies, and B.~Y.~V. Liu.
\newblock Most complex regular ideals.
\newblock http://arxiv.org/abs/1511.00157, 2015.

\bibitem{BrJiLiSm14QuotientComplexityOfBifixFactorSubwordFree}
J.~A. Brzozowski, G.~Jir\'askov\'a, B.~Li, and J.~Smith.
\newblock {Quotient Complexity of Bifix-, Factor-, and Subword-free Regular
  Languages}.
\newblock {\em Acta Cybernetica}, 21(4):507--527, 2014.

\bibitem{BLY12}
J.~A. Brzozowski, B.~Li, and Y.~Ye.
\newblock Syntactic complexity of prefix-, suffix-, bifix-, and factor-free
  regular languages.
\newblock {\em Theoret. Comput. Sci.}, 449:37--53, 2012.

\bibitem{BrSi16PrefixConvex}
J.~A. Brzozowski and C.~Sinnamon.
\newblock {Complexity of Prefix-Convex Regular Languages}, 2016.
\newblock \url{http://http://arxiv.org/abs/1605.06697}.

\bibitem{BrSi17NonReturning}
J.~A. Brzozowski and C.~Sinnamon.
\newblock {Most Complex Non-Returning Regular Languages}, 2016.
\newblock unpublished.

\bibitem{BrSz15ComplexityOfSuffixFree}
J.~A. Brzozowski and M.~Szyku{\l}a.
\newblock {Complexity of Suffix-Free Regular Languages}.
\newblock In A.~Kosowski and I.~Walukiewicz, editors, {\em FCT 2015}, volume
  9210 of {\em LNCS}, pages 146--159. Springer, 2015.

\bibitem{BrTa14}
J.~A. Brzozowski and H.~Tamm.
\newblock Theory of \'atomata.
\newblock {\em Theoret. Comput. Sci.}, 539:13--27, 2014.

\bibitem{BrYe11}
J.~A. Brzozowski and Y.~Ye.
\newblock Syntactic complexity of ideal and closed languages.
\newblock In Giancarlo Mauri and Alberto Leporati, editors, {\em DLT}, volume
  6795 of {\em LNCS}, pages 117--128. Springer, 2011.

\bibitem{BSX10}
Janusz Brzozowski, Jeffrey Shallit, and Zhi Xu.
\newblock Decision problems for convex languages.
\newblock {\em Information and Computation}, 209:353--367, 2011.

\bibitem{CmJi12}
R.~Cmorik and G.~Jir{\'a}skov{\'a}.
\newblock Basic operations on binary suffix-free languages.
\newblock In Z.~Kot{\'a}sek and et~al., editors, {\em MEMICS}, pages 94--102,
  2012.

\bibitem{EHJ2016NonReturning}
H.-S. Eom, Y.-S. Han, and G.~Jir{\'a}skov{\'a}.
\newblock State complexity of basic operations on non-returning regular
  languages.
\newblock {\em Fundamenta Informaticae}, 144(2):161--182, 2016.

\bibitem{GAP4}
The GAP~Group.
\newblock {\em {GAP -- Groups, Algorithms, and Programming}}, 2016.

\bibitem{Iva16ComplexityOfAtoms}
Szabolcs Iv{\'a}n.
\newblock Complexity of atoms, combinatorially.
\newblock {\em Information Processing Letters}, 116(5):356--360, 2016.

\bibitem{Jir2008}
G.~Jir{\'a}skov{\'a}.
\newblock {\em On the State Complexity of Complements, Stars, and Reversals of
  Regular Languages}, pages 431--442.
\newblock Springer, 2008.

\bibitem{JiPaMa14}
G.~Jir{\'a}skov{\'a}, M.~Palmovsk{\'y}, and J.~{\v{S}}ebej.
\newblock {\em Kleene Closure on Regular and Prefix-Free Languages}, pages
  226--237.
\newblock Springer, 2014.

\bibitem{JuKo1997Codes}
H.~J{\"u}rgensen and S.~Konstantinidis.
\newblock {\em Codes}, pages 511--607.
\newblock Springer, 1997.

\bibitem{McSe71}
R.~McNaughton and S.~A. Papert.
\newblock {\em Counter-Free Automata (M.I.T. Research Monograph No. 65)}.
\newblock The MIT Press, 1971.

\bibitem{Pic38}
S.~Piccard.
\newblock Sur les bases du group sym\'etrique et du groupe alternant.
\newblock {\em Commentarii Mathematici Helvetici}, 11(1):1--8, 1938.

\bibitem{Pin97}
J.-E. Pin.
\newblock Syntactic semigroups.
\newblock In {\em Handbook of Formal Languages, vol.~1: Word, Language,
  Grammar}, pages 679--746. Springer, New York, NY, USA, 1997.

\bibitem{Seb2013}
J.~{\v{S}}ebej.
\newblock {\em Reversal on Regular Languages and Descriptional Complexity},
  pages 265--276.
\newblock Springer, 2013.

\bibitem{SzWi16SyntacticComplexityOfBifixFree}
M.~Szyku{\l}a and J.~Wittnebel.
\newblock {Syntactic complexity of bifix-free languages}.
\newblock \url{http://arxiv.org/abs/1604.06936}, 2016.

\end{thebibliography}

\end{document}